\renewcommand{\today}{\ifcase \month \or January\or February\or March\or %
	April\or May\or June\or July\or August\or September\or October\or November\or %
	December\fi, \number \year} 
\newcommand{\bA}{\mbox{\bf A}}
\newcommand{\bE}{\mbox{\bf E}}
\newcommand{\bV}{\mbox{\bf V}}
\newcommand{\bU}{\mbox{\bf U}}
\newcommand{\bLambda}{\mbox{\boldmath $\Lambda$}}
\newcommand{\bSigma}{\mbox{\boldmath $\Sigma$}}
\newcommand{\bOmega}{\mbox{\boldmath $\Omega$}}
\newcommand{\cov}{\mathrm{cov}}
\newcommand{\tr}{\mathrm{tr}}
\newcommand{\beq}{\begin{eqnarray*}}
\newcommand{\eeq}{\end{eqnarray*}}
\newcolumntype{R}[2]{%
	>{\adjustbox{angle=#1,lap=\width-(#2)}\bgroup}%
	l%
	<{\egroup}%
}
\newtheorem{thm}{Theorem}[section]
\newtheorem{lem}{Lemma}[section]
\newtheorem{assum}{Assumption}[section]
\numberwithin{equation}{section}
\theoremstyle{definition}
\newtheorem{remark}{Remark}[section]
\def\@biblabel#1{\hspace*{-\labelsep}}
\numberwithin{equation}{section}
\renewcommand{\hat}{\widehat}
\renewcommand{\hat}{\widehat}
\newcommand{\bfm}[1]{\ensuremath{\mathbf{#1}}}
   \def\bA{\bfm A}
\def\be{\bfm e}   \def\bE{\bfm E}  
    \def\FF{\mathbb{F}}
   \def\bU{\bfm U}  
   \def\bV{\bfm V}
 \def\cF{{\cal  F}}
\newcommand{\bfsym}[1]{\ensuremath{\boldsymbol{#1}}}
              \def\bSigma{\bfsym \Sigma}
         \def\bLambda {\bfsym {\Lambda}}
           \def\bOmega {\bfsym {\Omega}}
\def\1{\bfsym{1}}	
\DeclareMathOperator{\Var}{Var}
\def\newpage{\vfill\eject}
\def\today{\ifcase\month\or
  January\or February\or March\or April\or May\or June\or
  July\or August\or September\or October\or November\or December\fi
  \space\number\day, \number\year}
\newdimen\biblioindent    \biblioindent=30pt
\newcommand{\beqn}{\begin{eqnarray}}
  \newcommand{\eeqn}{\end{eqnarray}}
\newcommand{\beqnn}{\begin{eqnarray*}}
  \newcommand{\eeqnn}{\end{eqnarray*}}
\def\tilde{\widetilde}
\def\FF{\mathcal{F}}
\def\[{\left [}  \def\]{\right ]} \def\({\left (}  \def\){\right )}
\def\hat{\widehat}
\theoremstyle{definition}
 \def \Diag {\mathrm{Diag}}
\begin{document}
	\bibliographystyle{ecta}
	
	\title{Low-Rank Structured Nonparametric Prediction of Instantaneous Volatility}

	\date{July 2025}
	
	\author{
		Sung Hoon Choi\thanks{%
			Department of Economics, University of Connecticut, Storrs, CT 06269, USA. 
			E-mail: \texttt{sung\_hoon.choi@uconn.edu}.} \\ 
		\and Donggyu Kim\thanks{Department of Economics, University of California, Riverside, CA 92521, USA.
			Email: \texttt{donggyu.kim@ucr.edu}.}
		}
	\maketitle
	\pagenumbering{arabic}	
\begin{abstract}
		\onehalfspacing
Based on Itô semimartingale models, several studies have proposed methods for forecasting intraday volatility using high-frequency financial data. 
These approaches typically rely on restrictive parametric assumptions and are often vulnerable to model misspecification.
 To address this issue, we introduce a novel nonparametric prediction method for the future intraday instantaneous volatility process during trading hours, which leverages both previous days' data and the current day's observed intraday data. 
 Our approach imposes an interday-by-intraday matrix representation of the instantaneous volatility, which is decomposed into a low-rank conditional expectation component and a noise matrix.
 To predict the future conditional expected volatility vector, we exploit this low-rank structure and propose the Structural Intraday-volatility Prediction (SIP) procedure.
 We establish the asymptotic properties of the SIP estimator and demonstrate its effectiveness through an out-of-sample prediction study using real high-frequency trading data.
		\\

\noindent \textbf{Key words:} 
		Diffusion process, high-frequency financial data, low-rank matrix, matrix completion.
\end{abstract}

\newpage
	
\doublespacing

\section{Introduction}
The analysis of volatility is essential in financial econometrics and statistics, and it has broad applications in hedging, option pricing, risk management, and portfolio allocation. 
In recent years, the widespread availability of high-frequency financial data has led to the development of several nonparametric integrated volatility estimation methods \citep{ait2010high, barndorff2008designing, barndorff2011multivariate, bibinger2014estimating, christensen2010pre, fan2018robust, fan2007multi, jacod2009microstructure, xiu2010quasi, zhang2005tale, zhang2006efficient, zhang2011estimating}.
These high-frequency-based estimators have significantly improved our understanding of low-frequency (interday) market dynamics and have motivated the development of various conditional volatility models built on realized measures \citep{andersen2003modeling, corsi2009simple, hansen2012realized, kim2016unified, shephard2010realising}. 
Parallel to these developments, nonparametric procedures for estimating instantaneous (or spot)  volatility have been proposed to capture intraday volatility dynamics \citep{fan2008spot,figueroa2022kernel, foster1996continuous, kristensen2010nonparametric, mancini2015spot, todorov2019nonparametric, todorov2023bias, zu2014estimating}.
Leveraging these well-performing estimators, several studies have illustrated U-shaped or reverse J-shaped intraday volatility patterns across various markets \citep{admati1988theory, andersen1997intraday, andersen2019time, hong2000trading, li2023robust}.

In contrast to the extensive literature on predicting daily volatility processes, relatively less attention has been given to forecasting intraday volatility.
For example, \cite{engle2012forecasting} proposed a modified GARCH model in which the conditional variance is expressed as the product of daily, diurnal, and stochastic intraday components to capture intraday volatility dynamics, while \cite{zhang2024volatility} explored several nonparametric machine learning methods for forecasting intraday realized volatility, leveraging the commonality observed in intraday volatility patterns.
Recently, \cite{TIP-PCA} considered the interday-by-intraday instantaneous volatility matrix process and proposed a semiparametric prediction procedure for the one-day-ahead instantaneous volatility process by leveraging both interday and intraday volatility dynamics.
However, existing methods for forecasting intraday volatility suffer from notable limitations. 
For example, many methods rely on strong distributional assumptions inherent in parametric models and often fail to incorporate interday dynamics effectively. 
 In practice, practitioners are often interested in forecasting the remaining future intraday volatility process conditional on the observed intraday data up to the present time.
Therefore, it is both practically relevant and methodologically important to develop a prediction procedure that adapts to current intraday information and is robust to potential model misspecification.


This paper introduces a novel nonparametric prediction approach based on a low-rank structure for forecasting the remaining future instantaneous volatility process during intraday trading.
Specifically, we represent the entire instantaneous volatility process as a matrix, where each row corresponds to a day and each column represents an intraday time point.
We impose a low-rank plus noise structure on this matrix. 
The parameter of interest is the instantaneous volatility process after the current intraday time.
That is, on the current day (denoted as the $D$th day), only the initial portion of the row is observed, while the remaining part is the target of prediction.
To forecast this remaining portion, we exploit the low-rank matrix structure. 
For example, the future instantaneous volatility vector can be represented by the singular components of the observed part of the low-rank matrix. 
We use this structure to construct a nonparametric prediction of the future volatility vector.
This method is called the Structural Intraday-volatility Prediction (SIP) procedure.
It is worth noting that as long as the interday-by-intraday instantaneous volatility matrix exhibits a low-rank structure, the SIP method can effectively predict the remaining future volatility vector without relying on any specific parametric dynamics such as autoregressive models.
Therefore, the SIP procedure is robust to model misspecification arising from incorrect parametric assumptions.
Furthermore, we derive the convergence rate of the predicted instantaneous volatility under the SIP framework and assess its empirical performance via an out-of-sample prediction study using high-frequency financial data.
The SIP method consistently outperforms alternative approaches in terms of both out-of-sample prediction accuracy and Value-at-Risk (VaR) forecasting performance.


The remainder of the paper is organized as follows.
In Section \ref{setup}, we introduce the model and propose the SIP prediction procedure.
Section \ref{asymp} presents the asymptotic properties of the SIP estimator.
In Section \ref{simulation}, we evaluate the finite sample performance of the proposed method through simulation.
Section \ref{empiric} applies the proposed method to real high-frequency financial data to forecast the remaining future instantaneous volatility process during trading hours.
Finally, the conclusion is provided in Section \ref{conclusion}.
All technical proofs are provided in the Appendix \ref{proofs}.

\section{Model Setup and Estimation Procedure} \label{setup}
Throughout this paper, we denote by $\|\bA\|_{F}$, $\|\bA\|_{2}$ (or $\|\bA\|$ for short), $\|\bA\|_{1}$, $\|\bA\|_{\infty}$, and $\|\bA\|_{\max}$ the Frobenius norm,  operator norm, $l_{1}$-norm, $l_{\infty}$-norm, and element-wise norm, which are defined, respectively, as $\|\bA\|_{F} = \tr^{1/2}(\bA^{\top}\bA)$, $\|\bA\|_{2} = \lambda_{\max}^{1/2}(\bA^{\top}\bA)$, $\|\bA\|_{1} = \max_{j}\sum_{i}|a_{ij}|$, $\|\bA\|_{\infty} = \max_{i}\sum_{j}|a_{ij}|$, and $\|\bA\|_{\max} = \max_{i,j}|a_{ij}|$. 
When $\bA$ is a vector, the maximum norm is denoted as $\|\bA\|_{\infty}=\max_{i}|a_{i}|$, and both $\|\bA\|$ and $\|\bA\|_{F}$ are equal to the Euclidean norm.

\subsection{A Model Setup} \label{model}
We consider the following jump diffusion process: for the $i$-th day and intraday time $t \in [0,1],$
\begin{equation}\label{diffusion-def}
	dX_{i,t}= \mu_{i,t} dt + \sigma_{i,t}  dB_{i,t} + J_{i,t} d P_{i,t}, 
\end{equation}
where $X_{i,t}$ is the log price of an asset, $\mu_{i,t}$ is a drift process,  $B_{i,t}$ is a one-dimensional standard Brownian motion,  $J_{i,t}$ is the jump size, and $P_{i,t}$ is the Poisson process with the intensity $\mu_{J}$.
For a given intraday time sequence, for each $i =1,\dots, D$ and $j=1,\dots, n$, we denote the instantaneous volatility process as $c_{i,j} := \sigma_{i,t_{j}}^2$,  where $0< t_{1} < \cdots < t_{n} =1$.
Then, we assume that the discrete-time instantaneous volatility process satisfies the following low-rank structure: 
\begin{equation}\label{Sigma}
	\bSigma_{D, n} = (c_{i,j}) _{D \times n}  = \bU \bLambda \bV^{\top} + \bE \equiv \bA + \bE,
\end{equation}
where $\bU=  (u_{i,k} ) _{i=1,\ldots, D, k=1,\ldots,r}$ is the left singular vector matrix, $\bV = (v_{j,k}) _{j=1, \ldots, n, k=1,\ldots, r}$ is the right singular vector matrix, $\bLambda= \Diag (\lambda_1, \ldots, \lambda_r)$ is the singular value matrix, and $\bE = (e_{i,j})_{D\times n}$ is the random noise matrix.  
We only impose the above low-rank structure in \eqref{Sigma} to predict future remaining intraday volatility.
 It allows for considerable modeling flexibility and can implicitly incorporate various time-series features.
For example, the left singular vector can represent interday time-series dynamics, while the right singular vector can account for intraday patterns, such as  U-shaped or reverse J-shaped \citep{TIP-PCA}.
It is worth noting that, unlike  \citet{TIP-PCA}, we do not assume any explicit dynamic or parametric form for the volatility evolution.
We note that $\bSigma_{D, n}$ is a $D \times n$ matrix, which is distinct from a covariance matrix and contains only positive elements.

In this paper, our target is to predict the remaining future instantaneous volatility vector for the $D$th day.
Specifically, we assume that trading data is available up to a fraction $\omega = \frac{n_1}{n}$ of the $D$th day. 
The objective is to predict the remaining $n_2 \times 1$ instantaneous vector, $(c_{D,n_1+1},\dots, c_{D,n})$, where $n = n_1 + n_2$.
Given  the current information $\FF_{D,n_1}$, we can predict the instantaneous volatility as follows: for $j = n_1+1, \ldots, n$,
 \begin{equation} \label{conditional exp}
 	E\left[\sigma_{D,j}^2 | \FF_{D,n_1}\right] =  \sum_{k=1}^r \lambda_k u_{D,k} v_{j,k} \,\, \text{ a.s.}
 \end{equation}
We can write the model \eqref{Sigma} in a partitioned matrix form as follows:
$$
\bSigma_{D,n} =
\begin{array}{c@{}c}
    \footnotesize  \begin{array}{cc} n_1 & n_2 \end{array} \\
    \left[
    \begin{array}{cc}
        \Sigma_{11} & \Sigma_{12} \\
        \Sigma_{21} & \Sigma_{22}
    \end{array}
    \right]& \footnotesize \begin{array}{c} D-1 \\ 1 \end{array} 
\end{array},
\bA =
\begin{array}{c@{}c}
    \footnotesize  \begin{array}{cc} n_1 & n_2 \end{array} \\
    \left[
    \begin{array}{cc}
        A_{11} & A_{12} \\
        A_{21} & A_{22}
    \end{array}
    \right]& \footnotesize \begin{array}{c} D-1 \\ 1 \end{array} 
\end{array},
\bE =
\begin{array}{c@{}c}
    \footnotesize  \begin{array}{cc} n_1 & n_2 \end{array} \\
    \left[
    \begin{array}{cc}
        E_{11} & E_{12} \\
        E_{21} & E_{22}
    \end{array}
    \right]& \footnotesize \begin{array}{c} D-1 \\ 1 \end{array}. 
\end{array}
$$
We note that  $\Sigma_{11}, \Sigma_{12}$, and $\Sigma_{21}$ are estimable using high-frequency log-price observations. 
Since $\bA$ is the low rank matrix, $A_{22}$ is given by
\begin{equation}\label{target}
    A_{22} = A_{21}(A_{11})^\dagger A_{12} = A_{21}(V_{11}\Lambda_{11}^{-1}U_{11}^{\top})A_{12},
\end{equation}
where $U_{11}$ is the $(D-1)\times r$ left singular vector matrix, $V_{11}$ is the $n_1 \times r$ right singular vector matrix, and $\Lambda_{11}$ is the $r\times r$ singular value matrix of $A_{11}$.
Since $A_{11}$, $A_{21}$, and $A_{12}$ are quantities related to the observed period, using this relationship \eqref{target}, we can predict $A_{22}$ without a specific parametric modeling assumption. 
Specifically, as long as we can estimate $A_{11}$, $A_{21}$, and $A_{12}$ well using the available data, $A_{22}$ can be estimated using the plug-in scheme. 
We note that if we treat the remaining intraday volatility vector $A_{22}$ as missing components, this problem becomes similar to the matrix completion \citep{cai2010singular,  candes2012exact, candes2010matrix}. 
Especially in this problem, the missing has a determined structure as in \citet{bai2021matrix, cai2016structured,fan2019structured, yan2024entrywise}.
From this point of view, this paper extends the matrix completion concept to predict time series patterns.

 Due to the imperfections in the trading process, practitioners cannot directly observe the true underlying log-stock price $X_{i,t}$ in \eqref{diffusion-def} \citep{ait2009high}. 
 To account for this, we assume that the high-frequency intraday observations $X_{i,t_s}, s=1, \dots, m,$  are contaminated by microstructure noise as follows:
	\begin{equation} 	\label{def-obervation}
	Y _{i,t_s} = X_{i,t_s} + e_{i,t_s}, \quad i=1,\ldots,D, s=1,\ldots,m,
	\end{equation}
where $e_{i,t_s}$ is the microstructure noise process with a mean of zero and a variance of $\eta_{ii}$. 


\subsection{Structural Intraday-volatility Prediction} \label{estimation procedure}

To predict the remaining future instantaneous volatility vector $A_{22}$ nonparametrically, we use the low-rank structural equation \eqref{target}. 
Since $A_{11}$, $A_{12}$, and $A_{21}$ are latent low-rank matrices, we need to estimate them using a well-performing instantaneous volatility estimator based on the observed high-frequency data. 
For example, we need a large dimension to estimate the latent low-rank matrix and use the blessing of dimensionality \citep{li2018embracing}.
To do this, we choose the large block matrices such as $(\Sigma_{11}, \Sigma_{12})$ and $(\Sigma_{11} ^{\top}, \Sigma_{21} ^{\top})$. 
On the other hand, we employ a well-performing nonparametric instantaneous volatility estimator to estimate the instantaneous volatility. Details can be found in Remark \ref{remark_spot} below.
The specific estimation procedure is as follows:
\begin{enumerate}
  	\item Using high-frequency log-price observations, we compute the instantaneous volatility estimators $\hat{c}_{i,j}$ up to the $D$th day $n_1$ intraday time point (see Remark \ref{remark_spot}).
    Then, we define the following submatrices: $\hat{\Sigma}_{11} = (\hat{c}_{i,j})_{(D-1)\times n_1}, \hat{\Sigma}_{12} = (\hat{c}_{i,j})_{(D-1)\times n_2}$, and $\hat{\Sigma}_{21} = (\hat{c}_{i,j})_{1\times n_1}$, which are the initial estimated matrices for $\Sigma_{11}$,$\Sigma_{12}$, and $\Sigma_{21}$, respectively.

    \item We conduct the singular value decomposition to the large submatrices $\hat{\Sigma}_{1\bullet} \equiv [\hat{\Sigma}_{11} \: \hat{\Sigma}_{12}]$ and $\hat{\Sigma}_{\bullet 1} \equiv [\hat{\Sigma}_{11}^{\top} \: \hat{\Sigma}_{21}^{\top}]^{\top}$.
    The columns of $\hat{U}_{11}$ are defined as the $r$ leading left singular vector of $\hat{\Sigma}_{1\bullet}$, and the columns of $\hat{V}_{11}$ are defined as the $r$ leading right singular vector of $\hat{\Sigma}_{\bullet 1}$.
    Note that $\hat{U}_{11}$ and $\hat{V}_{11}$ are estimators for the orthonormal basis of column vectors of $U_{11}$ and $V_{11}$ defined in \eqref{target}, respectively.
  	
   \item Finally, we estimate the conditional expectation of the remaining instantaneous volatility vector on the $D$th day, $E[(c_{D,n_1 + 1},\dots,c_{D,n}) | \FF_{D,n_{1}}]$, by
   $$
   \tilde{\Sigma}_{22} := (\tilde{c}_{D,n_{1}+ 1},\dots, \tilde{c}_{D,n}) = \hat{\Sigma}_{21}\hat{V}_{11}(\hat{U}_{11}^{\top}\hat{\Sigma}_{11}\hat{V}_{11})^{-1}\hat{U}_{11}^{\top}\hat{\Sigma}_{12}.
   $$

  \end{enumerate}
  \begin{remark} \label{remark_spot}
     For the SIP procedure, we first need to estimate the instantaneous volatilities using the observed data. 
     In the high-frequency literature, several nonparametric methods for instantaneous volatility estimation have been proposed \citep{fan2008spot, figueroa2022kernel, foster1996continuous, kristensen2010nonparametric, mancini2015spot, todorov2019nonparametric, todorov2023bias, zu2014estimating}. 
    Any well-performing estimator of the instantaneous volatility, denoted by  $\hat{c}_{i,j}$, that satisfies Assumption \ref{assum_spotvol}(iii) can be used within our framework.      
    In the numerical study, we adopt the jump-robust pre-averaging method proposed by \cite{figueroa2022kernel}, which is described explicitly in \eqref{pre-spot} in Section \ref{simulation}.
  \end{remark}
  
  \begin{remark}
      To implement the SIP procedure, we need to select the number of factors. 
      The number of latent factors, $r$, can be determined using data-driven methods \citep{ahn2013eigenvalue, bai2002determining, onatski2010determining}.
      For instance, $r$ can be chosen by maximizing the singular value ratio or the largest singular value gap, given by  $\max_{k\leq r_{\max}}\frac{\hat{\lambda}_{k}}{\hat{\lambda}_{k+1}}$ or $\max_{k\leq r_{\max}}(\hat{\lambda}_{k}-\hat{\lambda}_{k+1})$, where $\hat{\lambda}_{1}\geq \hat{\lambda}_{2}\geq \cdots \geq \hat{\lambda}_{r_{\max}}$ are the leading singular values of $\hat{\bSigma}_{1\bullet} \equiv (\hat{c}_{i,j})_{(D-1)\times n}$ for a predetermined maximum number of factors, $r_{\max}$.
    In this paper, we choose $r=1$ in the numerical study (Sections \ref{simulation} and \ref{empiric}), following the eigenvalue ratio method proposed by \cite{ahn2013eigenvalue}.
  \end{remark}

In summary, given the estimated instantaneous volatility process, we construct a structured matrix that includes the remaining future intraday volatility vector.
To forecast the future intraday volatility vector, we utilize the low-rank structure described above.
We refer to this procedure as the Structural Intraday-volatility Prediction (SIP) method.
The SIP method nonparametrically predicts the future instantaneous volatility vector using historical trading data along with current intraday market observations.
Crucially, by incorporating current intraday information, the prediction is formulated under a low-rank matrix assumption without relying on any specific parametric model, such as an autoregressive structure.
This feature makes the SIP procedure robust to model misspecification arising from incorrect parametric assumptions.
The numerical results in Sections \ref{simulation} and \ref{empiric} demonstrate that the SIP method effectively predicts the remaining future instantaneous volatility vector.

\section{Asymptotic Properties} \label{asymp}
In this section, we develop the asymptotic properties of the SIP estimator. 
To achieve this,  the following technical conditions are needed.
 \begin{assum} \label{assum_spotvol} ~
      \begin{itemize}
            \item[(i)] For $k \leq r$, the eigengap satisfies $|\lambda_{k+1} - \lambda_{k}| = O_{P}(\sqrt{nD})$ and $\lambda_{r+1} = 0$.
            
            \item[(ii)] For some bounded $\mu_{1}$ and $\mu_{2}$, the left and right singular vectors satisfy
            $$
            \max_{i\leq D, k\leq r}|u_{i,k}|^2 \leq \frac{\mu_1}{D}, \text{ and } \max_{j\leq n, k\leq r}|v_{j,k}|^2 \leq \frac{\mu_2}{n}.
            $$
            In addition, for each $k \leq r$, $u_{k} = (u_{1,k},\dots,u_{D,k})^{\top}$ and $v_{k} = (v_{1,k},\dots,v_{n,k})^{\top}$ are assumed to be constant vectors.
            
             \item[(iii)] For each $i\leq D$ and $j\leq n$, the estimated instantaneous volatility $\hat{c}_{i,j}$ satisfies
            $$
             \hat{c}_{i,j}-c_{i,j}  = \upsilon_{i,j} + \varsigma_{i,j},
            $$
            where $\upsilon_{i,j}$ follows the martingale difference sequence and $\varsigma_{i,j}$ is the estimation bias term such that $E(\upsilon_{i,j} | \cF_{i,t_j})=0$ a.s.
            For any $s \leq D$ and $j'\leq n$, we assume $E(  c_{i,j'}\upsilon_{i,j} | \cF_{i,t_j}) = 0$ and $E( c_{s,j} c_{s,j'}\upsilon_{i,j} | \cF_{i,t_j}) = 0$ a.s. 
            In addition, for all $i\neq s$ and any $j,j'\leq n$, we assume $E(\upsilon_{i,j}\upsilon_{s,j'})=0$.
            We further assume that $c_{i,j}$, $\upsilon_{i,j}$, and $\varsigma_{i,j}$ are sub-Gaussian random variables; that is, there exists a constant $K>0$ such that for all $i\leq D$, $j\leq n$, and all $\tau \in \mathbb{R}$,
            $$
            E[\exp(\tau \alpha_{i,j})] \leq \exp\left(\frac{K^{2}\tau^{2}}{2}\right) \quad \text{for} \quad \alpha_{i,j} \in \{c_{i,j}, m^{1/8} \upsilon_{i,j}, \rho_{m}^{-1} \varsigma_{i,j}\},
            $$
            where $\rho_m$ is the convergence rate of the bias term. 
            \item[(iv)] Let $\bOmega_{e_j,D\times D}$ denote the $D\times D$ covariance matrix of $\be_{\cdot j} = (e_{1,j},\ldots, e_{D,j})^{\top}$ for each $j = 1,\dots, n$. Define $\bOmega_{j,D\times D} = \frac{1}{n_1}U_{\bullet 1}\Lambda_{\bullet 1}^{2}U_{\bullet 1}^{\top} + \bOmega_{e_j,D\times D}$. The instantaneous volatility matrix, $\bSigma_{\bullet 1} \equiv [\Sigma_{11}^{\top} \Sigma_{21}^{\top}]^{\top}$, satisfies
            $$
            \|\bSigma_{\bullet 1}\bSigma_{\bullet 1}^{\top} - \sum_{j=1}^{n_{1}}\bOmega_{j,D\times D}\|_{\max} = O_{P}(\sqrt{n_{1}\log D}).        
            $$
            Similarly, let $\bOmega_{e_i,n\times n}$ denote the $n\times n$ covariance matrix of $\be_{i \cdot} = (e_{i,1},\ldots, e_{i,n})^{\top}$ for each $i = 1,\dots, D$. Define $\bOmega_{i, n\times n} = \frac{1}{D}\bV\bLambda^{2}\bV^{\top} + \bOmega_{e_{i},n\times n}$. The instantaneous volatility matrix, $\bSigma$, satisfies
            $$
            \|\bSigma^{\top}\bSigma - \sum_{i=1}^{D}\bOmega_{i,n\times n}\|_{\max} = O_{P}(\sqrt{D\log n}).     
            $$           
            \item[(v)]  The covariance matrices $\bOmega_{e_{j},D\times D} =\cov(\be_{\cdot j})$ for each $j = 1,\ldots,n$ and $\bOmega_{e_{i},n\times n} =\cov(\be_{i \cdot})$ for each $i = 1,\ldots, D$, where $\be_{\cdot j} = (e_{1,j},\ldots, e_{D,j})^{\top}$ and $\be_{i \cdot} = (e_{i,1},\ldots, e_{i,n})^{\top}$, satisfy
          \begin{align*}
            \max_{j\leq n}\|\bOmega_{e_{j},D\times D}\|_{1} = O(\varphi_D) \quad \text{and} \quad 
            \max_{i \leq D}\|\bOmega_{e_{i},n\times n}\|_{1} = O(\varphi_n), 
          \end{align*}
          where $\varphi_{D} = o(D)$ and $\varphi_{n} = o(n)$, respectively.
         In addition, $\|\bE\| = O_{P}(\max(\sqrt{D},\sqrt{n}))$.

          \item[(vi)] There exist positive constants $C_{1}$ and $C_{2}$ such that, for each $k\leq r$, $j, l\leq n$, and $i,q \leq D$, 
        \begin{align*}            &|\cov(\sum_{i=1}^{D}u_{i,k}e_{i,j},\sum_{i=1}^{D}u_{i,k}e_{i,l})|\leq C_{1}\varphi_{D}\rho_{1}(|j-l|),\quad \text{where} \quad \sum_{h=1}^{\infty}\rho_{1}(h) <\infty, \\
        &|\cov(\sum_{j=1}^{n}v_{j,k}e_{i,j},\sum_{j=1}^{n}v_{j,k}e_{q,j})|\leq C_{2}\varphi_{n}\rho_{2}(|i-q|),\quad \text{where} \quad  \sum_{h=1}^{\infty}\rho_{2}(h) <\infty.
        \end{align*}
         In addition, for each $k \leq r$, $j \leq  n$, and $i \leq D$, we have $E[(\sum_{i=1}^{D}u_{i,k}e_{i,j})^4] = O(\varphi_{D}^2)$ and $E[(\sum_{j=1}^{n}v_{j,k}e_{i,j})^4] = O(\varphi_{n}^2)$.
        \end{itemize}
  \end{assum}

\begin{remark}
Assumption \ref{assum_spotvol}(i) ensures a pervasive assumption, which is essential for the analysis of low-rank matrix structures (see, e.g., \citealp{candes2010matrix, cho2017asymptotic, fan2018large}). 
Given that our instantaneous volatility matrix is of size $D \times n$, this eigengap condition suggests that the leading eigenvalues are associated with the low-rank structure scale on the order of $\sqrt{nD}$.
Assumption \ref{assum_spotvol}(ii) imposes the conventional incoherence condition, which ensures effective entrywise control in low-rank matrix estimation.
Assumption \ref{assum_spotvol}(iii) is the sub-Gaussian conditions that can be justified under some mild assumptions on the process $X$, the microstructure noise, and the kernel function \citep{figueroa2022kernel,kim2016asymptotic, kim2016sparse}. 
Moreover, this condition also holds for heavy-tailed data given the bounded fourth moments condition using a truncated estimation scheme \citep{fan2018robust, shin2023adaptive}.
In practice, estimation of instantaneous volatility at time $t$ typically uses data up to and including time $t$, leading to the martingale difference property $E(\upsilon_{i,j} | \cF_{i,t_j})=0$ almost surely. 
The resulting instantaneous volatility estimator is asymptotically unbiased, and the bias vanishes at a rate faster than $m^{-\frac{1}{8}}$, which is the optimal convergence rate of the instantaneous volatility estimator in the presence of microstructure noise.
To justify the effectiveness of the smoothing technique, we require an uncorrelatedness condition:  $E( c_{i,j'}\upsilon_{i,j} | \cF_{i,t_j})= E( c_{s,j} c_{s,j'}\upsilon_{i,j} | \cF_{i,t_j}) = 0$ almost surely. 
This condition holds when the spot volatility process is adapted to $\cF_{i,t_j}$ and the noise is independent. 
Such conditions are mild and are commonly satisfied in standard time series settings like ARMA models. 
Assumption \ref{assum_spotvol}(iv) is the element-wise convergence condition for analyzing large matrix inference.
Under the sub-Gaussian condition and mixing time dependency, we can easily obtain this condition (see \citealp{fan2018large, fan2018eigenvector, vershynin2010introduction, wang2017asymptotics}). 
Assumption \ref{assum_spotvol}(v) imposes the conventional condition on the idiosyncratic covariance matrix, such as sparsity, which is commonly assumed in empirical applications \citep{boivin2006more, fan2016incorporating}.
We note that our framework allows for heterogeneity across both the intraday and interday dimensions.
Assumption \ref{assum_spotvol}(vi) requires weak temporal dependence in both the linear and quadratic forms of the projected error terms, as well as bounded fourth moments.
\end{remark}

We obtain the following elementwise convergence rate of the predicted instantaneous volatility using the SIP method.
    \begin{thm} \label{main_thm}
    Suppose that $r$ is fixed, $\log D = o(n_{1})$, $\log n = o(D)$, $\varphi_{D} \sqrt{n_{1}} \leq D$, $\varphi_{n}\sqrt{D} \leq n_{1}$ and Assumption \ref{assum_spotvol} hold.
    As $D,n_1,n,m \rightarrow \infty$, we have
        \begin{align*}
            &\max_{j\leq n_2}\left|\tilde{c}_{D,n_1+j} - E[c_{D,n_1+j} | \FF_{D,n_1}]\right| = O_{P}\Bigg( \rho_{m} + \frac{1}{m^{\frac{1}{4}}}   + \frac{\varphi_{D}}{D} + \sqrt{\frac{\log D}{n_1}}+    \frac{\varphi_{n}}{n_{1}}+ \sqrt{\frac{\log n}{D}}\Bigg).   
        \end{align*}    
    \end{thm}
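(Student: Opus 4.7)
The plan is to write the prediction error as
$$\tilde c_{D,n_1+j}-E[c_{D,n_1+j}\mid \FF_{D,n_1}]=\big[\hat\Sigma_{21}\hat V_{11}(\hat U_{11}^\top\hat\Sigma_{11}\hat V_{11})^{-1}\hat U_{11}^\top\hat\Sigma_{12}\big]_{j}-\big[A_{21}V_{11}\Lambda_{11}^{-1}U_{11}^\top A_{12}\big]_{j},$$
using the population identity \eqref{target} as the target, and then to split the difference into three layers of perturbation: (a) the raw spot-volatility error $\hat c_{i,j}-c_{i,j}$ in each of the blocks $\hat\Sigma_{11},\hat\Sigma_{12},\hat\Sigma_{21}$; (b) the entrywise deviation of the empirical leading singular vectors $(\hat U_{11},\hat V_{11})$ from $(U_{11},V_{11})$ up to an $r\times r$ rotation; and (c) the deviation between the population Gram matrices $\Sigma_{\bullet1}\Sigma_{\bullet1}^{\top}$, $\bSigma^{\top}\bSigma$ and their noise-inflated counterparts $\sum_j\bOmega_{j,D\times D}$, $\sum_i\bOmega_{i,n\times n}$. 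Expanding the plug-in product telescopically, each cross term will end up as a product of one factor from (a)--(c) against bounded population quantities, and together they will generate exactly the six summands in the theorem.

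For layer (a) I would invoke Assumption~\ref{assum_spotvol}(iii) and write $\hat c_{i,j}-c_{i,j}=\upsilon_{i,j}+\varsigma_{i,j}$. The bias $\varsigma_{i,j}$ produces the $\rho_m$ term directly. Linear functionals of the martingale differences $\upsilon_{i,j}$ against the bounded, incoherent weights $u_{i,k},v_{j,k}$ from Assumption~\ref{assum_spotvol}(ii) are controlled by a Hoeffding/Azuma bound, and, after normalizing by $n_1$ or $D$, are dominated by the quadratic contribution $\frac{1}{n_1}\sum_j\upsilon_{i,j}^2=O_P(m^{-1/4})$ that arises whenever $(\hat\Sigma_{11}-\Sigma_{11})^{\top}(\hat\Sigma_{11}-\Sigma_{11})$ or its block analogues appear, yielding the $m^{-1/4}$ term. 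For layer (c) Assumption~\ref{assum_spotvol}(iv) supplies the entrywise concentration of the Gram matrices, while Assumption~\ref{assum_spotvol}(v) shows that the noise-inflated bias has operator norm $O(\varphi_D)$ and $O(\varphi_n)$; combined with the eigengap $\lambda_r\asymp\sqrt{nD}$ from Assumption~\ref{assum_spotvol}(i), this produces the $\varphi_D/D$ and $\varphi_n/n_1$ terms in the subspace estimates. For layer (b) I would combine a Davis--Kahan/Wedin bound with a leave-one-out argument, which using Assumption~\ref{assum_spotvol}(vi) on the weak temporal dependence and the fourth-moment control of the projected noise, will give the entrywise rates $\|\hat U_{11}H_U-U_{11}\|_{\max}=O_P\!\big(\sqrt{\log D/n_1}+\varphi_D/D\big)$ and $\|\hat V_{11}H_V-V_{11}\|_{\max}=O_P\!\big(\sqrt{\log n/D}+\varphi_n/n_1\big)$ for suitable rotations $H_U,H_V$.

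Finally I would plug these entrywise rates into the telescoped expansion. Because $\lambda_r\asymp\sqrt{nD}$, the central inverse $(\hat U_{11}^\top\hat\Sigma_{11}\hat V_{11})^{-1}$ has operator norm $O_P(1/\sqrt{nD})$, so the $\sqrt{nD}$ factor from $\hat\Sigma_{21}$ or $\hat\Sigma_{12}$ cancels against it; under the incoherence condition every remaining row of $U_{11}$, $V_{11}$ is of order $1/\sqrt{D}$ or $1/\sqrt{n}$, and the max-norm products of these bounds against (a)--(c) reproduce precisely the six summands in the stated rate. The hard part will be the entrywise subspace control: standard Davis--Kahan gives only the $L^2$ rate $O_P(\|\hat\Sigma-\Sigma\|/\lambda_r)$, which is too crude to yield the $\sqrt{\log D/n_1}$ and $\sqrt{\log n/D}$ rates. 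Obtaining the sharper max-norm bounds requires a careful leave-one-out decoupling that exploits Assumption~\ref{assum_spotvol}(ii) together with (iv) and (vi), and the accompanying bookkeeping -- cleanly separating the martingale-difference contribution of $\upsilon_{i,j}$, the bias $\varsigma_{i,j}$, and the idiosyncratic component $\bE$ across the three blocks -- is the most delicate step.
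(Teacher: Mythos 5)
Your decomposition into three layers (raw spot-volatility error, singular-vector deviation, Gram-matrix perturbation) and the telescoping of the plug-in product mirror the paper's proof exactly, and the assignments of Assumptions~\ref{assum_spotvol}(i)--(vi) to those layers also match; the paper packages these steps as Lemmas~\ref{sin_val}--\ref{singular value} and then multiplies through the telescoped expansion just as you sketch. The genuine methodological departure is in layer (b): where you propose a Davis--Kahan/Wedin bound upgraded by a leave-one-out decoupling, the paper instead invokes the deterministic $\ell_\infty$ eigenvector-perturbation bound of \cite{fan2018eigenvector} (their Theorem~1), which under the incoherence condition gives $\max_{k\le r}\|\hat{u}_{\bullet k}-u_{\bullet k}\|_{\infty}\lesssim\|\hat{\Sigma}_{\bullet 1}\hat{\Sigma}_{\bullet 1}^{\top}-U_{\bullet 1}\Lambda_{\bullet 1}^{2}U_{\bullet 1}^{\top}\|_{\infty}/(\bar{\gamma}\sqrt{D})$, so the entrywise singular-vector problem reduces to concentrating $\|\hat{\Sigma}_{\bullet 1}\hat{\Sigma}_{\bullet 1}^{\top}-\Sigma_{\bullet 1}\Sigma_{\bullet 1}^{\top}\|_{\max}$ by Freedman/Bernstein inequalities on the sub-Gaussian martingale structure of $\upsilon_{i,j}$. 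A leave-one-out argument can in principle give matching rates, but it leans on approximate independence of the removed row from the leave-one-out subspace; the temporal dependence permitted by Assumption~\ref{assum_spotvol}(vi) makes that decoupling delicate, whereas the deterministic $\ell_\infty$ bound is purely algebraic and sidesteps the issue --- that is what the paper's route buys. One bookkeeping slip to flag: your stated entrywise rates $\|\hat{U}_{11}H_U-U_{11}\|_{\max}=O_P(\sqrt{\log D/n_1}+\varphi_D/D)$ and $\|\hat{V}_{11}H_V-V_{11}\|_{\max}=O_P(\sqrt{\log n/D}+\varphi_n/n_1)$ are each missing a factor of $1/\sqrt{D}$ (respectively $1/\sqrt{n}$); compare Lemma~\ref{sin_vectors}, which gives $\sqrt{\log D/(n_1D)}$ and $\varphi_D/D^{3/2}$, etc. Since the entries of $U_{11}$ are themselves $O(1/\sqrt{D})$ by Assumption~\ref{assum_spotvol}(ii), an entrywise deviation cannot sit at a larger scale. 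After multiplying by the $\sqrt{D}$ and $\sqrt{n}$ factors that arise in the telescoped product (through $\Lambda_{11}\asymp\sqrt{n_1 D}$ and the incoherent rows of $U_{11},V_{11}$), the correctly normalized bounds land exactly on the six stated summands, so the slip is harmless for the final theorem but would cause trouble if carried literally into the plug-in step.
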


Theorem \ref{main_thm} shows that the proposed SIP estimator consistently predicts the future instantaneous volatility process.
The convergence rate of the SIP estimator is given by $m^{-1/4} +    \frac{1}{\sqrt{D}} + \frac{1}{\sqrt{n_1}}$ up to the log order and the sparsity level, when $\rho_m < m^{-1/4}$.
Each term in this rate corresponds to a distinct component of the estimation challenge:
 $D^{-1/2}$ reflects the statistical cost of learning the interday (daily) time-series dynamics;
$n_1^{-1/2} $ captures the cost of estimating intraday volatility patterns;
 $ m^{-1/4}$ arises from estimating the unobserved instantaneous volatility using high-frequency returns.
Notably, while the optimal nonparametric convergence rate for estimating instantaneous volatility is $m^{-1/8}$, the SIP method achieves a faster rate of $m^{-1/4}$. 
This improvement stems from the low-rank structure, which allows the SIP estimator to benefit from a law-of-large-numbers-type averaging across the volatility surface.
In addition, the term $ \rho_m$ arises from the bias component in the spot volatility estimator.
Specifically, the bias originates from the drift term, which typically decays at the rate  $m^{-1}$; however, due to the subsampling approach used to mitigate microstructure noise, the effective convergence rate  $\rho_m$  is generally of order  $m^{-1/2}$. 



\section{Simulation Study} \label{simulation}
In this section, we examined the finite sample performance of the proposed SIP method through simulations.
We generated high-frequency observations that incorporate both interday HAR model feature and the intraday periodic pattern \citep{TIP-PCA} as follows: for $i= 1,\dots, D$, $s=0,\dots, m$, and $t_{s} = s/m$, 
    \begin{align*}
        &Y_{i,t_{s}} = X_{i,t_{s}} + e_{i,t_{s}},\\
        &dX_{i,t} = (\mu - \sigma_{i,t}^2/2)dt + \sigma_{i,t} dB_{i,t} + J_{i,t}dP_{i,t},\\
        &\sigma_{i,t_{s}}^2 = \tilde{\sigma}_{i}^2 h(t_{s}) + \varepsilon_{i,t_{s}},
    \end{align*}
    where we set microstructure noise as $e_{i,t_{s}} \sim \mathcal{N}(0,0.0005^2)$ and the initial value as $X_{1,0} = 1$; $B_{i,t}$ is a standard Brownian motion; for the jump part, we set $J_{i,t} \sim \mathcal{N}(-0.01, 0.02^2)$ and $P_{i,t+\Delta} - P_{i,t} \sim \text{Poisson}(36\Delta/252)$; $\tilde{\sigma}_{i} = b_{0} + b_{1}\tilde{\sigma}_{i-1} + b_{2}\frac{1}{5}\sum_{s=1}^{5}\tilde{\sigma}_{i-s} + b_{3}\frac{1}{22}\sum_{s=1}^{22}\tilde{\sigma}_{i-s} + \zeta_{i}$,  $\zeta_{i} \sim \mathcal{N}(0,1)$,  $h(t_{s}) = \gamma_{0} + \gamma_{1}(t_{s}-0.6)^2$, and $\varepsilon_{i,t_{s}} = q(t_{s})\xi_{i,t_{s}}$, where $q(t_{s})^{2} = 0.1+0.5(2t_{s}-1)^{2}$ and $\xi_{i,t_{s}} \sim \mathcal{N}(0,0.01^2)$.
The model parameters were set to be
    \begin{align*}
    \mu = 0.05/252, \, \gamma_{0} = 0.04/252, \, \gamma_{1} = 0.5/252, \\
    b_{0} = 0.5, \, b_{1} =0.372, \, b_{2} = 0.343, \, b_{3} = 0.224.
    \end{align*}
The normalized parameter values above imply the daily time unit, and we adapted the estimated coefficients studied in \cite{corsi2009simple} to generate $\tilde{\sigma}_{i}$.
We note that in each simulation, the instantaneous volatility process was generated until all instantaneous volatility values were positive, based on the data-generating process described above.
We set $m =$ 23,400, which indicates that the data are observed every second over a period of 6.5 trading hours per day.

For each simulation, we used the jump robust pre-averaging method \citep{figueroa2022kernel} to estimate the instantaneous volatility, $c_{i,\tau}=\sigma_{i,\frac{\tau}{n}}^2$, at a frequency of every 5 minutes (i.e., $n=78$) for each $i$-th day as follows: for $\tau = 1, \dots, n$,
\begin{equation}\label{pre-spot}
\hat{c}_{i,\tau} = \frac{1}{\phi_{k_{n}}(g)}\sum_{s=1}^{m-k_{n}+1}K_{b_{m}}(t_{s-1}-\frac{\tau}{n})\left(\bar{Y}_{i,s}^2 - \frac{1}{2}\hat{Y}_{i,s}\right)\mathbf{1}_{\{|\bar{Y}_{i,s}|\leq \nu_{m}\}},
\end{equation}
where $K_{b}(x) = K(x/b)/b$, the bandwidth size $b_{m} = 1/n$, the weight function $g(x) = 2x \wedge (1-x)$, 
\begin{align*}
    &\bar{Y}_{i,s} = \sum_{l=1}^{k_{m}-1}g\left(\frac{l}{k_{m}}\right)(Y_{i,t_{s+l}}-Y_{i,t_{s+l-1}}), \qquad  \phi_{k_{m}}(g) = \sum_{i=1}^{k_{m}}g\left(\frac{i}{k_{m}}\right)^2,\\
    &\hat{Y}_{i,s} = \sum_{l=1}^{k_{m}}\biggl\{g\left(\frac{l}{k_{m}}\right)-g\left(\frac{l-1}{k_{m}}\right)\biggl\}^2 (Y_{i,t_{s+l}}-Y_{i,t_{s+l-1}})^2,
\end{align*}
$\mathbf{1}_{\{\cdot\}}$ is an indicator function, and $\nu_{m} = 1.8\sqrt{\text{BPV}}(k_{m}/m)^{0.47}$, where the bipower variation $\text{BPV} = \frac{\pi}{2}\sum_{s=2}^{m}|Y_{i,t_{s-1}} - Y_{i,t_{s-2}}| |Y_{i,t_{s}} - Y_{i,t_{s-1}}|$.
We used the uniform kernel function and the data-driven approach to obtain the preaveraging window size, $k_m$, as suggested in Section 3.1 of \cite{figueroa2022kernel}.

With the instantaneous volatility estimates available up to the $n_1$-th intraday data point on the $D$th day, we examined the out-of-sample performance of estimating the remaining $(1\times n_2)$ instantaneous volatility vector, where $n_1 = \omega n$ and $n_2 = n- n_1$.
For comparison, we considered the SIP, AVE, AR, SARIMA, HAR-D, XGBoost, PC, and TIP-PCA methods in predicting $c_{D,\tau}$ for $\tau = n_1+1, \ldots, n$.
AVE represents estimates obtained from the column mean of $\hat{\bSigma}_{D-1,n}$, while AR generates predictions using an autoregressive model of order 1, applied to each column of $\hat{\bSigma}_{D-1,n}$.
SARIMA provides  $n_2$-step-ahead forecasts using a seasonal ARIMA(1,1,1) model \citep{sheppard2010financial}, where the seasonal cycle length is set to $n$. 
HAR-D extends the HAR model by incorporating diurnal effects and prior intraday components in addition to daily, weekly, and monthly realized volatilities \citep{zhang2024volatility}. 
XGBoost \citep{chen2016xgboost}, a decision-tree-based ensemble learning method, accounts for nonlinear dependencies, using the same hyperparameters as in \cite{zhang2024volatility}. 
Since SARIMA, HAR-D, and XGBoost require a time series format, we first vectorized the estimates as $(\hat{c}_{1,1},\dots, \hat{c}_{D,n_1})$ before applying these methods. 
PC corresponds to the last row of the estimated low-rank matrix obtained from the best rank-$r$  approximation of $\hat{\bSigma}_{D-1,n}$.
TIP-PCA uses ex-post daily, weekly, and monthly realized volatilities and the intraday time sequence as observable covariates that explain the left and right singular vectors, as proposed by \cite{TIP-PCA}, given $\hat{\bSigma}_{D-1,n}$.
For SIP, TIP-PCA, and PC, we set rank 1 as determined by the eigenvalue ratio method \citep{ahn2013eigenvalue}.
We generated high-frequency data with  $m = 23,400$  over 200 consecutive days, using subsampled log prices from the last  $D =$ 50, 100, 150,  and  200  days. 
To assess prediction accuracy, we computed the mean squared prediction error (MSPE) as
$$
\frac{1}{n}\sum_{\tau=n_1+1}^{n}(\tilde{c}_{D,\tau}-c_{D,\tau})^{2},
$$
where  $\tilde{c}_{D,\tau}$  denotes the instantaneous volatility estimator of each method above.
Finally, we calculated the sample averages of MSPEs across 500 simulations.

\begin{figure}
	\includegraphics[width=\linewidth]{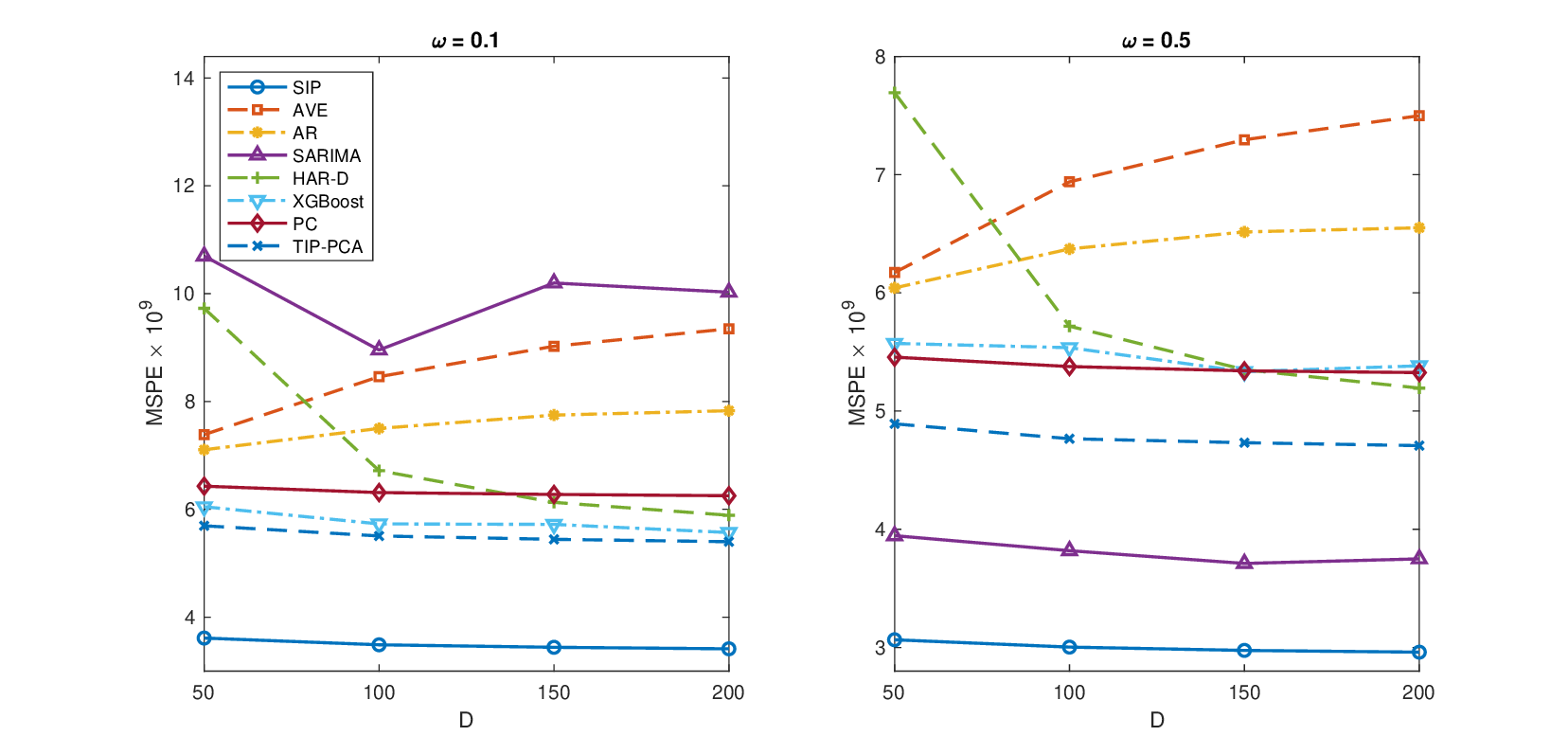}
	\centering	
	\caption{MSPE$\times 10^9$ for the SIP, AVE, AR, SARIMA, HAR-D, XGBoost, PC, and TIP-PCA against $D$ with fixed $\omega = \{0.1, 0.5\}$.}				\label{varying_D}
\end{figure}

Figure \ref{varying_D} presents the average MSPEs of the estimators for the remaining future intraday instantaneous volatility process across different values of dimensionality, $D = {50, 100, 150, 200}$, with fixed $n$ and $\omega$.
Figure \ref{varying_rho}, on the other hand, plots the average MSPEs against varying values of $\omega = {0.05, 0.1, 0.2, \dots, 0.9}$, while keeping $D$ and $n$ fixed.
Note that the target future volatility remains consistent across different $D$ values, as the same subsampled data are used in each simulation setting.
In contrast, in Figure \ref{varying_rho}, the MSPEs exhibit a U-shaped pattern as $\omega$ varies.
This behavior arises because the target future volatility differs with each value of $\omega$, and the underlying data-generating process follows a U-shaped intraday volatility pattern.
As a result, direct comparisons of MSPE values across different $\omega$ values are not appropriate.
Instead, MSPEs should be interpreted relatively across competing methods for each fixed value of $\omega$.

\begin{figure}
	\includegraphics[width=\linewidth]{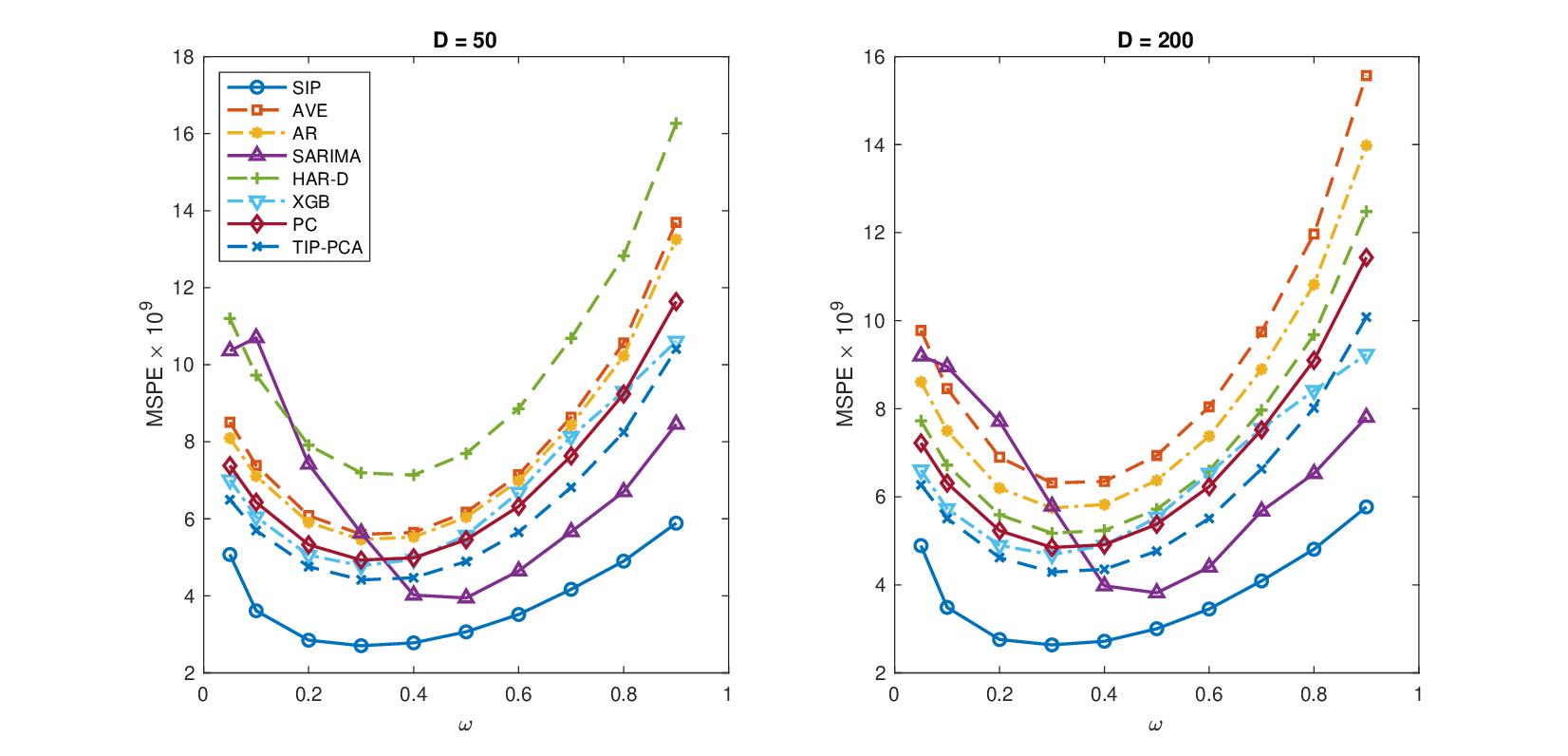}
	\centering	
	\caption{MSPE$\times 10^9$ for the SIP, AVE, AR, SARIMA, HAR-D, XGBoost, PC, and TIP-PCA against $\omega$ with fixed $D=\{50, 100\}$.}				\label{varying_rho}
\end{figure}

Figures \ref{varying_D} and \ref{varying_rho} demonstrate that the SIP method consistently achieves the best performance among the competing approaches.
This may be because by incorporating information from previous intraday volatility patterns on the $D$th day, SIP effectively predicts the remaining future instantaneous volatility process through a nonparametric framework.
Moreover, the MSPEs of SIP tend to decrease as the number of daily observations $D$ increases, which supports the theoretical results established in Section \ref{asymp}.
 In contrast, the TIP-PCA, PC, AVE, and AR methods perform poorly, as they are unable to leverage current intraday observations on the $D$th day within their estimation procedures.
In Figure \ref{varying_rho}, we observe that the performance of SARIMA improves as $\omega$ increases.
This suggests that, given sufficient real-time intraday information, SARIMA can better incorporate both periodic intraday patterns and daily autoregressive dynamics.
Nevertheless, SIP outperforms SARIMA and all other competing methods across the entire range of $\omega$, particularly when $\omega$ is small.

\section{Empirical Study} \label{empiric}
In this section, we applied the proposed SIP method to predict intraday instantaneous volatility vector using real high-frequency trading data.
Specifically, we obtained intraday data of the S\&P 500 index ETF (SPY) and the 11 Global Industry Classification Standard (GICS) sector ETFs (XLC, XLY, XLP, XLE, XLF, XLV, XLI, XLB, XLRE, XLK, and XLU) from July 2021 to June 2022, sourced from the Trade and Quote (TAQ) database in the Wharton Research Data Services (WRDS) system.
We used high-frequency data subsampled at 1-second intervals, excluding trading days with early market closures.
This subsampling mitigates the effects of irregular observation times \citep{li2023robust}.
Using the log prices, we implemented the jump robust pre-averaging estimation procedure, as defined in Section \ref{simulation}, to estimate the instantaneous variance at 5-minute intervals.
Using the in-sample period data, we applied SIP, AVE, AR, SARIMA, HAR-D, XGBoost, PC, and TIP-PCA, as described in Section \ref{simulation}, to predict the remaining instantaneous volatilities given each $\omega = \{0.1, 0.5, 0.9\}$. 
We employed the rolling window scheme with a 63-day (i.e., one quarter) in-sample period, where the last $n_2 = (1-\omega)n$ instantaneous volatilities on the final in-sample day are the target volatility.
The out-of-sample period was from October 2021 to June 2022 (i.e., $q=189$ days).

To evaluate the performance of the predicted instantaneous volatility, we employed the mean squared prediction errors (MSPE) and QLIKE loss function \citep{patton2011volatility}, defined as follows: for each $\omega$ and $n_2 = (1-\omega)n$,
\begin{align*}
    &\text{MSPE} = \frac{1}{qn_2}\sum_{i=1}^{q}\sum_{j = 1}^{n_2}(\tilde{c}_{i,j} - \hat{c}_{i,j})^{2},\\
    &\text{QLIKE} = \frac{1}{qn_2}\sum_{i=1}^{q}\sum_{j = 1}^{n_2}(\log{\tilde{c}_{i,j}} + \frac{\hat{c}_{i,j}}{\tilde{c}_{i,j}}),
\end{align*}
where $\tilde{c}_{i,j}$ represents the predicted instantaneous volatility on the $j$th intraday time point of the $i$th out-of-sample day, obtained  from one of the SIP, AVE, AR, SARIMA, HAR-D, XGBoost, PC, and TIP-PCA methods.
We predicted the remaining future conditional expected instantaneous volatilities using in-sample period data. 
Since the true conditional expected instantaneous volatility is unknown, we assessed the significance of differences in prediction performances using the Diebold and Mariano (DM) test \citep{diebold2002comparing} based on both MSPE and QLIKE. 
We compared the proposed SIP method against alternative approaches. 
Given that multiple hypothesis tests were conducted repeatedly throughout this section, it is crucial to control the False Discovery Rate (FDR). 
To account for this, we applied the Benjamini–Hochberg (BH) procedure \citep{benjamini1995controlling} at a significance level of  $\alpha = 0.05$  to adjust the  $p$-values and mitigate the risk of false positives across all hypothesis tests performed in this section.

\begin{table}[htbp]
    \centering
    \caption{MSPEs ($\times 10^9$) for SIP, AVE, AR, SARIMA, HAR-D, XGBoost, PC, and TIP-PCA across 12 ETFs under varying values of $\omega \in \{0.1, 0.5, 0.9\}$.} \label{MSPE}
        \scalebox{0.91}{
    \begin{tabular}{lllllllll}
        \hline
         & SIP & AVE & AR & SARIMA & HAR-D & XGBoost & PC & TIP-PCA \\
        \midrule
        \multicolumn{9}{c}{$\omega = 0.1$} \\
        \cmidrule(lr){2-9}
        SPY & \textbf{2.259} & 3.555\textsuperscript{***} & 3.647\textsuperscript{***} & 3.099\textsuperscript{***} & 3.590\textsuperscript{***} & 2.858\textsuperscript{***} & 2.839\textsuperscript{***} & 2.693\textsuperscript{***} \\
        XLC & \textbf{2.981} & 4.669\textsuperscript{***} & 5.073\textsuperscript{***} & 4.757\textsuperscript{***} & 5.299\textsuperscript{***} & 4.478\textsuperscript{***} & 3.873\textsuperscript{***} & 3.681\textsuperscript{***} \\
        XLY & \textbf{6.959} & 12.196\textsuperscript{***} & 11.554\textsuperscript{***} & 11.769\textsuperscript{***} & 13.272\textsuperscript{***} & 11.328\textsuperscript{***} & 9.838\textsuperscript{***} & 9.472\textsuperscript{***} \\
        XLP & \textbf{0.569} & 0.833\textsuperscript{***} & 0.816\textsuperscript{***} & 0.793\textsuperscript{***} & 0.784\textsuperscript{***} & 0.818\textsuperscript{***} & 0.641\textsuperscript{***} & 0.634\textsuperscript{***} \\
        XLE & \textbf{7.089} & 10.871\textsuperscript{***} & 10.362\textsuperscript{***} & 12.145\textsuperscript{***} & 10.327\textsuperscript{***} & 8.715\textsuperscript{***} & 8.418\textsuperscript{***} & 8.282\textsuperscript{***} \\
        XLF & \textbf{1.772} & 3.534\textsuperscript{***} & 3.079\textsuperscript{***} & 2.661\textsuperscript{***} & 3.101\textsuperscript{***} & 2.678\textsuperscript{***} & 2.595\textsuperscript{***} & 2.500\textsuperscript{***} \\
        XLV & \textbf{0.884} & 1.316\textsuperscript{***} & 1.197\textsuperscript{***} & 1.071\textsuperscript{***} & 1.251\textsuperscript{***} & 1.226\textsuperscript{***} & 0.994\textsuperscript{***} & 0.995\textsuperscript{***} \\
        XLI & \textbf{1.642} & 2.626\textsuperscript{***} & 2.626\textsuperscript{***} & 2.204\textsuperscript{***} & 2.511\textsuperscript{***} & 2.358\textsuperscript{***} & 1.900\textsuperscript{***} & 1.911\textsuperscript{***} \\
        XLB & \textbf{2.024} & 3.114\textsuperscript{***} & 3.115\textsuperscript{***} & 2.836\textsuperscript{***} & 3.005\textsuperscript{***} & 2.778\textsuperscript{***} & 2.434\textsuperscript{***} & 2.393\textsuperscript{***} \\
        XLRE & \textbf{1.558} & 2.422\textsuperscript{***} & 2.343\textsuperscript{***} & 2.087\textsuperscript{***} & 2.325\textsuperscript{***} & 2.188\textsuperscript{***} & 1.945\textsuperscript{***} & 1.918\textsuperscript{***} \\
        XLK & \textbf{6.256} & 10.667\textsuperscript{***} & 10.154\textsuperscript{***} & 8.183\textsuperscript{***} & 10.816\textsuperscript{***} & 10.222\textsuperscript{***} & 8.273\textsuperscript{***} & 8.072\textsuperscript{***} \\
        XLU & \textbf{0.814} & 1.175\textsuperscript{***} & 1.139\textsuperscript{***} & 1.108\textsuperscript{***} & 1.160\textsuperscript{***} & 1.223\textsuperscript{***} & 0.988\textsuperscript{***} & 0.981\textsuperscript{***} \\
        \midrule
        \multicolumn{9}{c}{$\omega = 0.5$} \\
        \cmidrule(lr){2-9}
        SPY & \textbf{3.012} & 4.114\textsuperscript{***} & 4.488\textsuperscript{***} & 3.125\textsuperscript{***} & 4.212\textsuperscript{***} & 3.446\textsuperscript{***} & 3.345\textsuperscript{***} & 3.234\textsuperscript{***} \\
        XLC & \textbf{3.675} & 4.857\textsuperscript{***} & 5.831\textsuperscript{***} & 3.907\textsuperscript{***} & 5.868\textsuperscript{***} & 4.668\textsuperscript{***} & 4.223\textsuperscript{***} & 4.010\textsuperscript{***} \\
        XLY & \textbf{8.327} & 12.452\textsuperscript{***} & 12.470\textsuperscript{***} & 9.307\textsuperscript{***} & 13.671\textsuperscript{***} & 10.188\textsuperscript{***} & 10.316\textsuperscript{***} & 10.136\textsuperscript{***} \\
        XLP & \textbf{0.703} & 1.005\textsuperscript{***} & 1.041\textsuperscript{***} & 0.742\textsuperscript{***} & 0.945\textsuperscript{***} & 0.972\textsuperscript{***} & 0.771\textsuperscript{***} & 0.768\textsuperscript{***} \\
        XLE & \textbf{6.196} & 8.504\textsuperscript{***} & 8.301\textsuperscript{***} & 6.747\textsuperscript{***} & 8.793\textsuperscript{***} & 6.685\textsuperscript{***} & 7.097\textsuperscript{***} & 7.035\textsuperscript{***} \\
        XLF & \textbf{2.214} & 3.892\textsuperscript{***} & 3.587\textsuperscript{***} & 2.627\textsuperscript{***} & 3.545\textsuperscript{***} & 2.861\textsuperscript{***} & 2.902\textsuperscript{***} & 2.834\textsuperscript{***} \\
        XLV & \textbf{1.022} & 1.534\textsuperscript{***} & 1.439\textsuperscript{***} & 1.044 & 1.459\textsuperscript{***} & 1.757\textsuperscript{***} & 1.179\textsuperscript{***} & 1.177\textsuperscript{***} \\
        XLI & \textbf{1.911} & 2.904\textsuperscript{***} & 3.171\textsuperscript{***} & 2.013\textsuperscript{***} & 2.750\textsuperscript{***} & 2.457\textsuperscript{***} & 2.106\textsuperscript{***} & 2.129\textsuperscript{***} \\
        XLB & \textbf{2.442} & 3.421\textsuperscript{***} & 3.611\textsuperscript{***} & 2.601\textsuperscript{***} & 3.245\textsuperscript{***} & 3.029\textsuperscript{***} & 2.667\textsuperscript{***} & 2.656\textsuperscript{***} \\
        XLRE & \textbf{1.850} & 2.765\textsuperscript{***} & 2.777\textsuperscript{***} & 2.070\textsuperscript{***} & 2.573\textsuperscript{***} & 2.536\textsuperscript{***} & 2.218\textsuperscript{***} & 2.190\textsuperscript{***} \\
        XLK & \textbf{7.727} & 11.448\textsuperscript{***} & 11.655\textsuperscript{***} & 8.401\textsuperscript{***} & 12.085\textsuperscript{***} & 9.823\textsuperscript{***} & 9.191\textsuperscript{***} & 9.067\textsuperscript{***} \\
        XLU & \textbf{0.937} & 1.344\textsuperscript{***} & 1.334\textsuperscript{***} & 0.968\textsuperscript{**} & 1.335\textsuperscript{***} & 1.250\textsuperscript{***} & 1.113\textsuperscript{***} & 1.126\textsuperscript{***} \\
    \midrule
     \multicolumn{9}{c}{$\omega = 0.9$} \\
        \cmidrule(lr){2-9}        
        SPY & \textbf{2.302} & 4.383\textsuperscript{***} & 3.623\textsuperscript{***} & 2.715 & 4.240\textsuperscript{***} & 2.555 & 3.295\textsuperscript{***} & 3.387\textsuperscript{***} \\
        XLC & \textbf{2.703} & 4.956\textsuperscript{***} & 4.519\textsuperscript{***} & 3.286\textsuperscript{*} & 5.361\textsuperscript{***} & 2.704 & 3.892\textsuperscript{***} & 3.955\textsuperscript{***} \\
        XLY & \textbf{5.483} & 11.780\textsuperscript{***} & 9.936\textsuperscript{***} & 7.261\textsuperscript{**} & 13.361\textsuperscript{***} & 5.768 & 9.009\textsuperscript{***} & 9.712\textsuperscript{***} \\
        XLP & \textbf{0.592} & 1.228\textsuperscript{***} & 1.016\textsuperscript{***} & 0.763\textsuperscript{**} & 1.116\textsuperscript{***} & 0.749\textsuperscript{**} & 0.852\textsuperscript{***} & 0.910\textsuperscript{***} \\
        XLE & 4.317 & 6.357\textsuperscript{***} & 5.428\textsuperscript{***} & \textbf{3.936} & 6.476\textsuperscript{***} & 4.050 & 5.601\textsuperscript{***} & 6.117\textsuperscript{***} \\
        XLF & \textbf{1.548} & 4.240\textsuperscript{***} & 3.015\textsuperscript{***} & 1.908\textsuperscript{**} & 3.312\textsuperscript{***} & 1.690 & 2.822\textsuperscript{***} & 2.892\textsuperscript{***} \\
        XLV & \textbf{0.943} & 1.924\textsuperscript{***} & 1.555\textsuperscript{***} & 1.102\textsuperscript{*} & 1.722\textsuperscript{***} & 1.271\textsuperscript{***} & 1.356\textsuperscript{***} & 1.445\textsuperscript{***} \\
        XLI & \textbf{1.491} & 3.456\textsuperscript{***} & 2.417\textsuperscript{***} & 1.679\textsuperscript{*} & 2.889\textsuperscript{***} & 1.984\textsuperscript{***} & 2.258\textsuperscript{***} & 2.478\textsuperscript{***} \\
        XLB & \textbf{2.011} & 4.099\textsuperscript{***} & 3.362\textsuperscript{***} & 2.405\textsuperscript{**} & 3.710\textsuperscript{***} & 2.172 & 2.931\textsuperscript{***} & 3.188\textsuperscript{***} \\
        XLRE & \textbf{1.722} & 3.692\textsuperscript{***} & 3.138\textsuperscript{***} & 1.959 & 3.373\textsuperscript{***} & 2.652\textsuperscript{***} & 2.860\textsuperscript{***} & 2.925\textsuperscript{***} \\
        XLK & \textbf{5.594} & 10.868\textsuperscript{***} & 9.580\textsuperscript{***} & 7.009\textsuperscript{*} & 11.661\textsuperscript{***} & 7.044 & 8.139\textsuperscript{***} & 8.830\textsuperscript{***} \\
        XLU & \textbf{0.916} & 1.803\textsuperscript{***} & 1.481\textsuperscript{***} & 1.211\textsuperscript{**} & 1.748\textsuperscript{***} & 1.182\textsuperscript{**} & 1.398\textsuperscript{***} & 1.523\textsuperscript{***} \\
                \bottomrule
    \end{tabular}}
       	\centering
	\begin{tablenotes}
		\item Note: Bold numbers indicate the lowest MSPE for each ETF, while $^{***}$, $^{**}$, and $^{*}$ indicate rejection of the null hypothesis against SIP at the 1\%, 5\%, and 10\% significance levels, respectively, based on the Diebold-Mariano (DM) test.
	\end{tablenotes}    
\end{table}

\begin{table}[htbp]
    \centering
    \caption{QLIKEs for SIP, AVE, AR, SARIMA, HAR-D, XGBoost, PC, and TIP-PCA across 12 ETFs under varying values of $\omega \in \{0.1, 0.5, 0.9\}$.} \label{QLIKE}
        \scalebox{0.91}{
    \begin{tabular}{lllllllll}
        \hline
         & SIP & AVE & AR & SARIMA & HAR-D & XGBoost & PC & TIP-PCA \\
        \midrule
        \multicolumn{9}{c}{$\omega = 0.1$} \\
        \cmidrule(lr){2-9}
        SPY & \textbf{-9.287} & -8.970\textsuperscript{***} & -9.114\textsuperscript{***} & -5.377\textsuperscript{***} & -7.974\textsuperscript{***} & -9.208\textsuperscript{***} & -9.213\textsuperscript{***} & -9.071\textsuperscript{***} \\
        XLC & \textbf{-9.116} & -8.863\textsuperscript{***} & -8.942\textsuperscript{***} & 0.942\textsuperscript{***} & -8.255\textsuperscript{***} & -8.990\textsuperscript{***} & -9.046\textsuperscript{***} & -8.955\textsuperscript{***} \\
        XLY & \textbf{-8.521} & -8.243\textsuperscript{***} & -8.347\textsuperscript{***} & 10.962\textsuperscript{***} & -6.636\textsuperscript{***} & -8.285\textsuperscript{***} & -8.452\textsuperscript{***} & -8.435\textsuperscript{***} \\
        XLP & \textbf{-9.576} & -9.436\textsuperscript{***} & -9.489\textsuperscript{***} & -5.412\textsuperscript{***} & -9.166\textsuperscript{***} & -9.531\textsuperscript{***} & -9.547\textsuperscript{***} & -9.543\textsuperscript{***} \\
        XLE & \textbf{-8.210} & -8.095\textsuperscript{***} & -8.131\textsuperscript{***} & -4.371\textsuperscript{***} & -7.621\textsuperscript{***} & -8.175\textsuperscript{***} & -8.179\textsuperscript{***} & -8.179\textsuperscript{***} \\
        XLF & \textbf{-8.490} & -8.385\textsuperscript{***} & -8.434\textsuperscript{***} & 1.489\textsuperscript{***} & -8.037\textsuperscript{***} & -8.465\textsuperscript{***} & -8.454\textsuperscript{***} & -8.455\textsuperscript{***} \\
        XLV & \textbf{-9.563} & -9.375\textsuperscript{***} & -9.452\textsuperscript{***} & -5.433\textsuperscript{***} & -8.594\textsuperscript{***} & -9.451\textsuperscript{***} & -9.539\textsuperscript{***} & -9.495\textsuperscript{***} \\
        XLI & \textbf{-9.286} & -9.057\textsuperscript{***} & -9.167\textsuperscript{***} & -6.441\textsuperscript{***} & -8.464\textsuperscript{***} & -9.201\textsuperscript{***} & -9.269\textsuperscript{***} & -8.926\textsuperscript{***} \\
        XLB & \textbf{-9.239} & -9.016\textsuperscript{***} & -9.088\textsuperscript{***} & -7.276\textsuperscript{***} & -7.454\textsuperscript{***} & -9.135\textsuperscript{***} & -9.202\textsuperscript{***} & -9.184\textsuperscript{***} \\
        XLRE & \textbf{-9.197} & -9.036\textsuperscript{***} & -9.077\textsuperscript{***} & -5.944\textsuperscript{***} & -7.678\textsuperscript{***} & -9.082\textsuperscript{***} & -9.137\textsuperscript{***} & -9.140\textsuperscript{***} \\
        XLK & \textbf{-8.661} & -8.374\textsuperscript{***} & -8.501\textsuperscript{***} & -6.775\textsuperscript{***} & -7.012\textsuperscript{***} & -8.569\textsuperscript{***} & -8.587\textsuperscript{***} & -8.563\textsuperscript{***} \\
        XLU & \textbf{-9.222} & -9.119\textsuperscript{***} & -9.144\textsuperscript{***} & -7.066\textsuperscript{***} & -8.418\textsuperscript{***} & -9.160\textsuperscript{***} & -9.178\textsuperscript{***} & -9.185\textsuperscript{***} \\
        \midrule
        \multicolumn{9}{c}{$\omega = 0.5$} \\
        \cmidrule(lr){2-9}
        SPY & \textbf{-9.337} & -9.006\textsuperscript{***} & -9.161\textsuperscript{***} & -9.123\textsuperscript{***} & -8.207\textsuperscript{***} & -9.225\textsuperscript{***} & -9.271\textsuperscript{***} & -9.083\textsuperscript{***} \\
        XLC & \textbf{-9.181} & -8.918\textsuperscript{***} & -9.001\textsuperscript{***} & -8.955\textsuperscript{***} & -8.168\textsuperscript{***} & -8.986\textsuperscript{***} & -9.111\textsuperscript{***} & -9.001\textsuperscript{***} \\
        XLY & \textbf{-8.640} & -8.340\textsuperscript{***} & -8.453\textsuperscript{***} & -8.033\textsuperscript{**} & -5.726\textsuperscript{***} & -8.543\textsuperscript{***} & -8.572\textsuperscript{***} & -8.541\textsuperscript{***} \\
        XLP & \textbf{-9.623} & -9.456\textsuperscript{***} & -9.520\textsuperscript{***} & -9.577\textsuperscript{***} & -9.082\textsuperscript{***} & -9.559\textsuperscript{***} & -9.591\textsuperscript{***} & -9.582\textsuperscript{***} \\
        XLE & \textbf{-8.388} & -8.265\textsuperscript{***} & -8.306\textsuperscript{***} & -8.257\textsuperscript{***} & -7.639\textsuperscript{***} & -8.355\textsuperscript{***} & -8.348\textsuperscript{***} & -8.346\textsuperscript{***} \\
        XLF & \textbf{-8.559} & -8.439\textsuperscript{***} & -8.493\textsuperscript{***} & -8.532\textsuperscript{***} & -8.046\textsuperscript{**} & -8.529\textsuperscript{***} & -8.517\textsuperscript{***} & -8.517\textsuperscript{***} \\
        XLV & \textbf{-9.657} & -9.417\textsuperscript{***} & -9.512\textsuperscript{***} & -9.430\textsuperscript{**} & -8.067\textsuperscript{***} & -9.524\textsuperscript{***} & -9.608\textsuperscript{***} & -9.547\textsuperscript{***} \\
        XLI & \textbf{-9.369} & -9.100\textsuperscript{***} & -9.223\textsuperscript{***} & -9.276\textsuperscript{***} & -8.140\textsuperscript{***} & -9.287\textsuperscript{***} & -9.341\textsuperscript{***} & -9.061\textsuperscript{***} \\
        XLB & \textbf{-9.316} & -9.064\textsuperscript{***} & -9.147\textsuperscript{***} & -9.177\textsuperscript{***} & -6.553\textsuperscript{***} & -9.212\textsuperscript{***} & -9.279\textsuperscript{***} & -9.257\textsuperscript{***} \\
        XLRE & \textbf{-9.232} & -9.048\textsuperscript{***} & -9.091\textsuperscript{***} & -8.969\textsuperscript{***} & -6.833\textsuperscript{***} & -9.111\textsuperscript{***} & -9.158\textsuperscript{***} & -9.161\textsuperscript{***} \\
        XLK & \textbf{-8.764} & -8.446\textsuperscript{***} & -8.590\textsuperscript{***} & -8.561\textsuperscript{***} & -6.418\textsuperscript{***} & -8.429\textsuperscript{*} & -8.688\textsuperscript{***} & -8.652\textsuperscript{***} \\
        XLU & \textbf{-9.294} & -9.170\textsuperscript{***} & -9.203\textsuperscript{***} & -9.231\textsuperscript{***} & -7.461\textsuperscript{***} & -9.237\textsuperscript{***} & -9.241\textsuperscript{***} & -9.245\textsuperscript{***} \\
                \midrule
        \multicolumn{9}{c}{$\omega = 0.9$} \\
        \cmidrule(lr){2-9}        
        SPY & \textbf{-9.065} & -8.754\textsuperscript{***} & -8.892\textsuperscript{***} & -9.063 & -8.485\textsuperscript{***} & -9.013\textsuperscript{**} & -8.947\textsuperscript{***} & -8.795\textsuperscript{***} \\
        XLC & \textbf{-8.858} & -8.606\textsuperscript{***} & -8.702\textsuperscript{***} & -8.819\textsuperscript{**} & -8.057 & -8.803\textsuperscript{***} & -8.760\textsuperscript{***} & -8.662\textsuperscript{***} \\
        XLY & -8.425 & -8.138\textsuperscript{***} & -8.261\textsuperscript{***} & -8.413 & -5.270\textsuperscript{***} & \textbf{-8.435} & -8.333\textsuperscript{***} & -8.304\textsuperscript{***} \\
        XLP & -9.250 & -9.089\textsuperscript{***} & -9.155\textsuperscript{***} & -9.253 & \textbf{-10.141} & -9.224\textsuperscript{***} & -9.199\textsuperscript{***} & -9.187\textsuperscript{***} \\
        XLE & -8.289 & -8.187\textsuperscript{***} & -8.224\textsuperscript{***} & -8.281 & -8.118\textsuperscript{***} & \textbf{-8.300}\textsuperscript{**} & -8.229\textsuperscript{***} & -8.229\textsuperscript{***} \\
        XLF & \textbf{-8.404} & -8.268\textsuperscript{***} & -8.340\textsuperscript{***} & -8.402 & -8.303 & -8.400 & -8.349\textsuperscript{***} & -8.347\textsuperscript{***} \\
        XLV & \textbf{-9.265} & -9.049\textsuperscript{***} & -9.124\textsuperscript{***} & -9.245 & -6.789\textsuperscript{*} & -9.236\textsuperscript{***} & -9.196\textsuperscript{***} & -9.146\textsuperscript{***} \\
        XLI & \textbf{-8.993} & -8.738\textsuperscript{***} & -8.869\textsuperscript{***} & -8.962 & -8.556\textsuperscript{***} & -8.972\textsuperscript{***} & -8.930\textsuperscript{***} & -8.623\textsuperscript{***} \\
        XLB & \textbf{-8.955} & -8.706\textsuperscript{***} & -8.807\textsuperscript{***} & -8.924 & -8.773 & -8.861 & -8.893\textsuperscript{***} & -8.858\textsuperscript{***} \\
        XLRE & \textbf{-8.804} & -8.621\textsuperscript{***} & -8.645\textsuperscript{***} & -8.680 & -1.301 & -8.709\textsuperscript{***} & -8.722\textsuperscript{***} & -8.702\textsuperscript{***} \\
        XLK & \textbf{-8.554} & -8.263\textsuperscript{***} & -8.391\textsuperscript{***} & -8.481\textsuperscript{*} & -8.245\textsuperscript{***} & -8.544 & -8.454\textsuperscript{***} & -8.418\textsuperscript{***} \\
        XLU & \textbf{-8.919} & -8.794\textsuperscript{***} & -8.824\textsuperscript{***} & -8.914 & -8.757\textsuperscript{***} & -8.893\textsuperscript{***} & -8.853\textsuperscript{***} & -8.847\textsuperscript{***} \\
                \bottomrule
    \end{tabular}}
       	\centering
	\begin{tablenotes} 
        \item Note: Bold numbers indicate the lowest QLIKE for each ETF, while $^{***}$, $^{**}$, and $^{*}$ indicate rejection of the null hypothesis against SIP at the 1\%, 5\%, and 10\% significance levels, respectively, based on the Diebold-Mariano (DM) test.
	\end{tablenotes}    
\end{table}

Tables \ref{MSPE} and \ref{QLIKE} report the MSPE and QLIKE results, respectively, under varying values of $\omega \in \{0.1, 0.5, 0.9\}$.
As discussed in Section \ref{simulation}, the target future volatility changes with $\omega$.
Consequently, direct comparisons of MSPE and QLIKE values across different $\omega$ values are inappropriate.
Instead, performance should be evaluated relative to other methods for each given $\omega$.
Tables \ref{MSPE}--\ref{QLIKE} show that the SIP method consistently outperforms other methods across different values of $\omega$. 
This implies that incorporating the early intraday trading data enhances the accuracy of instantaneous volatility prediction.
We note that the performance of SARIMA and XGBoost improves when $\omega = 0.9$.
This may be attributed to the availability of a larger amount of information on the last trading day, which improves their prediction performance for a smaller set of target volatilities.

\begin{table}[htbp]
    \centering
    \caption{Number of cases where the adjusted $p$-value exceeds 0.05 for SIP, AVE, AR, SARIMA, HAR-D, XGBoost, PC, and TIP-PCA across 12 ETFs at each $q_0 = \{0.01, 0.02, 0.05, 0.1, 0.2\}$, based on the LRuc, LRcc, and DQ tests under varying values of $\omega \in \{0.1, 0.5, 0.9\}$.} 
    \label{VaR}    
    \scalebox{0.91}{
    \begin{tabular}{lrrrrr|rrrrr|rrrrr}
        \toprule
        & \multicolumn{5}{c}{LRuc} & \multicolumn{5}{c}{LRcc} & \multicolumn{5}{c}{DQ} \\
        \cmidrule(lr){2-6} \cmidrule(lr){7-11} \cmidrule(lr){12-16}
        $q_{0}$ & 0.01 & 0.02 & 0.05 & 0.1 & 0.2 & 0.01 & 0.02 & 0.05 & 0.1 & 0.2 & 0.01 & 0.02 & 0.05 & 0.1 & 0.2 \\
        \midrule
        & \multicolumn{15}{c}{$\omega = 0.1$} \\
        \cmidrule(lr){2-16}
        SIP      & 12 & 12 & 12 & 12 & 12 & 11 & 12 & 12 & 12 & 12 & 7 & 6 & 7 & 7 & 9 \\
        AVE      &  2 &  1 &  2 & 11 & 12 &  2 &  1 &  5 &  7 & 12 & 1 & 1 & 1 & 2 & 5 \\
        AR       &  4 &  3 &  6 & 12 & 12 &  6 &  4 &  5 & 12 & 12 & 1 & 2 & 3 & 3 & 6 \\
        SARIMA   &  0 &  0 &  0 &  0 &  0 &  0 &  0 &  0 &  0 &  0 & 0 & 0 & 0 & 1 & 2 \\
        HAR-D    &  0 &  0 &  0 &  4 & 12 &  0 &  0 &  0 &  3 & 12 & 1 & 1 & 1 & 2 & 4 \\
        XGBoost  &  7 &  5 &  6 &  9 & 12 &  6 &  3 &  4 &  8 & 12 & 2 & 2 & 2 & 4 & 5 \\
        PC       &  7 &  7 & 12 & 12 & 12 &  9 & 11 & 10 & 12 & 12 & 4 & 3 & 3 & 4 & 7 \\
        TIP-PCA  & 12 & 10 & 12 & 12 & 12 &  9 & 10 & 11 & 12 & 12 & 4 & 3 & 4 & 5 & 7 \\
        \midrule       
        & \multicolumn{15}{c}{$\omega = 0.5$} \\
        \cmidrule(lr){2-16}
        SIP       & 12 & 12 & 12 & 12 & 12 & 11 & 12 & 12 & 12 & 12 & 6  & 7  & 7  & 9  & 10 \\
        AVE       & 3  & 3  & 5  & 12 & 12 & 5  & 3  & 8  & 11 & 12 & 2  & 2  & 3  & 4  & 6  \\
        AR        & 7  & 6  & 11 & 12 & 12 & 7  & 8  & 11 & 12 & 12 & 3  & 3  & 4  & 5  & 8  \\
        SARIMA    & 5  & 6  & 11 & 12 & 12 & 5  & 7  & 12 & 12 & 12 & 2  & 3  & 4  & 7  & 10 \\
        HAR-D     & 0  & 0  & 1  & 4  & 12 & 0  & 1  & 1  & 3  & 12 & 1  & 1  & 2  & 3  & 5  \\
        XGBoost   & 5  & 5  & 4  & 8  & 12 & 6  & 5  & 5  & 11 & 12 & 2  & 2  & 3  & 5  & 7  \\
        PC        & 10 & 12 & 12 & 12 & 12 & 10 & 12 & 12 & 12 & 12 & 5  & 5  & 5  & 6  & 8  \\
        TIP-PCA   & 10 & 12 & 12 & 12 & 12 & 10 & 10 & 11 & 12 & 12 & 5  & 4  & 5  & 6  & 8  \\
        \midrule       
        & \multicolumn{15}{c}{$\omega = 0.9$} \\
        \cmidrule(lr){2-16}
        SIP      & 12 & 12 & 12 & 12 & 12 & 12 & 12 & 12 & 12 & 12 & 8  & 9  & 10 & 11 & 12 \\
        AVE      & 12 & 12 & 12 & 12 & 12 & 11 & 11 & 12 & 12 & 12 & 6  & 6  & 7  & 9  & 10 \\
        AR       & 12 & 12 & 12 & 12 & 12 & 12 & 12 & 12 & 12 & 12 & 7  & 8  & 8  & 10 & 11 \\
        SARIMA   & 12 & 12 & 12 & 12 & 12 & 12 & 12 & 12 & 12 & 12 & 7  & 7  & 9  & 11 & 11 \\
        HAR-D    & 6  & 8  & 12 & 12 & 12 & 6  & 8  & 12 & 12 & 12 & 4  & 4  & 6  & 9  & 11 \\
        XGBoost  & 12 & 12 & 12 & 12 & 12 & 12 & 12 & 12 & 12 & 12 & 7  & 8  & 8  & 10 & 11 \\
        PC       & 12 & 12 & 12 & 12 & 12 & 12 & 12 & 12 & 12 & 12 & 6  & 7  & 8  & 10 & 11 \\
        TIP-PCA  & 12 & 12 & 12 & 12 & 12 & 11 & 12 & 12 & 12 & 12 & 7  & 7  & 9  & 10 & 11 \\
        \bottomrule
    \end{tabular}}
\end{table}

To further evaluate the performance of the proposed method, we conducted a 5-minute frequency Value at Risk (VaR) estimation for the remainder of the trading day, given each $\omega$.
Specifically, we first predicted the remaining conditional expected instantaneous volatilities on the $D$th day using the SIP, AVE, AR, SARIMA, HAR-D, XGBoost, PC, and TIP-PCA, based on the same in-sample period data as in the previous analysis.
We then computed quantiles using historical standardized 5-minute returns.
Specifically, we first standardized in-sample 5-minute returns with the estimated conditional instantaneous volatilities; then, we derived sample quantiles at levels 0.01, 0.02, 0.05, 0.1, and 0.2.
Using these sample quantiles and predicted instantaneous volatilities, we obtained 5-minute frequency VaR values for each prediction method. 
We used a fixed in-sample period as one quarter and employed a rolling window scheme with the out-of-sample period spanning 189 days, which is consistent with the previous analysis.

Based on the estimated VaR values, we conducted the likelihood ratio unconditional coverage (LRuc) test \citep{kupiec1995techniques}, the likelihood ratio conditional coverage (LRcc) test \citep{christoffersen1998evaluating}, and the dynamic quantile (DQ) test with lag 4 \citep{engle2004caviar}.
Table \ref{VaR} reports the number of cases where the adjusted $p$-value using the BH procedure exceeds 0.05 for the 12 ETFs at each $q_0 \in \{0.01, 0.02, 0.05, 0.1,0.2\}$, based on the LRuc, LRcc, and DQ tests under varying values of $\omega \in \{0.1,0.5,0.9\}$.
From Table \ref{VaR}, we find that the SIP method consistently outperforms other methods across all hypothesis tests.
We note that when $\omega$ is small, the TIP-PCA method also performs well relative to other methods.
This may be because when $\omega$ is small, the amount of current trading day's information is small. 
Thus, TIP-PCA, which uses only the information up to the previous day's information, can account for the interday and intraday dynamics by incorporating the interday HAR structure and the U-shaped intraday volatility feature.
Additionally, Table \ref{VaR} shows that the performance of SARIMA gets better as $\omega$ grows, aligning with the results in Table \ref{MSPE}.
However, SARIMA exhibits poor performance when $\omega$ is small in terms of VaR estimation.
Table \ref{VaR} indicates that SIP outperforms TIP-PCA and other methods, particularly for lower quantiles, which are generally more difficult to predict.
This result confirms that the proposed nonparametric prediction method, SIP, by leveraging current intraday volatility information, significantly enhances both the prediction accuracy of future instantaneous volatilities and the effectiveness of risk management.

\section{Conclusion} \label{conclusion}

This paper proposes a novel nonparametric prediction procedure for forecasting the remaining future instantaneous volatility process during the current intraday trading period. 
The proposed method, Structural Intraday-volatility Prediction (SIP), leverages both previous days' data and partially observed current-day data by considering the missing component structure of the approximately low-rank matrix representation of the instantaneous volatility process.
That is, this paper extends the low-rank matrix completion to predict time series patterns, which makes it possible to forecast future intraday volatility without relying on parametric modeling assumptions.
As a result, SIP is robust to model misspecification.
We establish the asymptotic properties of the SIP estimator and prove its consistency.

In the empirical analysis, the SIP method outperforms existing alternatives in terms of out-of-sample forecasting accuracy and Value at Risk (VaR) estimation for the remaining intraday volatility.
Notably, SIP demonstrates strong predictive performance even when only a small fraction of intraday observations is available.
This finding highlights that early intraday information plays a crucial role in forecasting the remaining volatility, and that the low-rank structure effectively captures intraday volatility dynamics without requiring a specific model form such as an autoregressive process.

	\bibliography{myReferences}

\newpage
\appendix

\section{Appendix} \label{proofs}

	\subsection{Related Lemmas} \label{proof of Thm1}
    We first present useful lemmas below.
    Let $\hat{A}_{\bullet 1}$ be the best rank-$r$ approximation to $\hat{\Sigma}_{\bullet 1}$ such that $\hat{A}_{\bullet 1} = \sum_{i=1}^{r}\hat{\lambda}_{1,i}\hat{u}_{1,i}\hat{v}_{1,i}^{\top}$, where $\{\hat{\lambda}_{1,i}, \hat{u}_{1,i}, \hat{v}_{1,i}\}_{i=1}^{D\wedge n_1}$ are the ordered singular values, left-singular vectors and right-singular vectors of $\hat{\Sigma}_{\bullet 1}$ in decreasing order.
    Similarly, we let $\hat{A}_{1 \bullet}$ be the best rank-$r$ approximation to $\hat{\Sigma}_{1 \bullet}$.
    Then, we can write the SIP estimator as follows: 
    $$
    \tilde{\Sigma}_{22} = \hat{\Sigma}_{21}\hat{V}_{11}(\hat{U}_{11}^{\top}\hat{\Sigma}_{11}\hat{V}_{11})^{-1}\hat{U}_{11}^{\top}\hat{\Sigma}_{12} = \hat{A}_{21}\hat{V}_{11}(\hat{U}_{11}^{\top}\hat{\Sigma}_{11}\hat{V}_{11})^{-1}\hat{U}_{11}^{\top}\hat{A}_{12},
    $$
    where $\hat{A}_{21}$ is the last row of $\hat{A}_{\bullet 1}$, and $\hat{A}_{12}$ corresponds to the $(D-1) \times n_2$ right block  of $\hat{A}_{1 \bullet}$.
    
    Let $\Lambda_{\bullet 1}$ and $\Lambda_{1 \bullet}$ denote the $r \times r$ diagonal matrices of the leading $r$ singular values of $A_{\bullet 1} \equiv [A_{11}^{\top} A_{21}^{\top}]^{\top}$ and $A_{1\bullet} \equiv [A_{11} A_{12}]$, respectively.
    Define $\hat{\Lambda}_{\bullet 1} = \Diag(\hat{\lambda}_{\bullet 1}, \dots, \hat{\lambda}_{\bullet r}),$ where $\hat{\lambda}_{\bullet 1}\geq \hat{\lambda}_{\bullet2}\geq \cdots \geq \hat{\lambda}_{\bullet r}$ are the square root of leading eigenvalues of $\hat{\Sigma}_{\bullet 1}\hat{\Sigma}_{\bullet 1}^{\top}$. 
    Similarly, define $\hat{\Lambda}_{1 \bullet} = \Diag(\hat{\lambda}_{1\bullet}, \dots, \hat{\lambda}_{r \bullet})$, where $\hat{\lambda}_{1 \bullet}\geq \hat{\lambda}_{2\bullet}\geq \cdots \geq \hat{\lambda}_{r\bullet}$ are the square root of leading eigenvalues of $\hat{\Sigma}_{1\bullet }^{\top}\hat{\Sigma}_{1\bullet}$.
    The following lemma presents the individual convergence rate of singular value estimators.    

\begin{lem}\label{sin_val}
        Under Assumption \ref{assum_spotvol}, we have
        \begin{align*}
            \text{(i) }  \|\hat{\Lambda}_{\bullet 1} - \Lambda_{\bullet 1}\|_{\max} &= O_{P}\Bigg(\sqrt{\frac{D}{n_{1}}}+\sqrt{n_{1}D}\rho_{m} + \sqrt{\varphi_{D}} + \sqrt{\frac{n_{1}}{D}}\varphi_D\Bigg),\\
            \text{(ii) }  \|\hat{\Lambda}_{1\bullet} - \Lambda_{1\bullet}\|_{\max} &= O_{P}\Bigg(\sqrt{{\frac{n}{D}}} +\sqrt{nD}\rho_{m}  + \sqrt{\varphi_{n}} + \sqrt{\frac{D}{n}}\varphi_n\Bigg).
        \end{align*}
    \end{lem}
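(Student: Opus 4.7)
The plan is to apply Weyl's inequality to the squared singular values — namely the eigenvalues of $\hat{\bSigma}_{\bullet 1}\hat{\bSigma}_{\bullet 1}^{\top}$ and $A_{\bullet 1}A_{\bullet 1}^{\top} = U_{\bullet 1}\Lambda_{\bullet 1}^{2}U_{\bullet 1}^{\top}$ — and then convert back to singular values via the identity $\hat\lambda_{\bullet k}^{2}-\lambda_{\bullet k}^{2}=(\hat\lambda_{\bullet k}-\lambda_{\bullet k})(\hat\lambda_{\bullet k}+\lambda_{\bullet k})$. The denominator here is of order $\sqrt{n_{1}D}$ by Assumption \ref{assum_spotvol}(i), which is what produces the division by $\sqrt{n_{1}D}$ in the stated rate. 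Part (ii) follows by applying exactly the same argument to $\hat{\bSigma}_{1\bullet}^{\top}\hat{\bSigma}_{1\bullet}$, interchanging the roles of $D$ and $n$.

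Next, I would decompose $\hat{\bSigma}_{\bullet 1}=\bSigma_{\bullet 1}+\bXi_{\bullet 1}$, where $\bXi_{\bullet 1}$ is the matrix of estimation errors with entries $\upsilon_{i,j}+\varsigma_{i,j}$, and $\bSigma_{\bullet 1}=A_{\bullet 1}+\bE_{\bullet 1}$, so that
\begin{equation*}
\hat{\bSigma}_{\bullet 1}\hat{\bSigma}_{\bullet 1}^{\top}-A_{\bullet 1}A_{\bullet 1}^{\top}=\bigl[\bSigma_{\bullet 1}\bSigma_{\bullet 1}^{\top}-A_{\bullet 1}A_{\bullet 1}^{\top}\bigr]+\bigl[\hat{\bSigma}_{\bullet 1}\hat{\bSigma}_{\bullet 1}^{\top}-\bSigma_{\bullet 1}\bSigma_{\bullet 1}^{\top}\bigr].
\end{equation*}
For the first bracket, I would further expand $\bSigma_{\bullet 1}\bSigma_{\bullet 1}^{\top}-A_{\bullet 1}A_{\bullet 1}^{\top}=A_{\bullet 1}\bE_{\bullet 1}^{\top}+\bE_{\bullet 1}A_{\bullet 1}^{\top}+\bE_{\bullet 1}\bE_{\bullet 1}^{\top}$, and then add and subtract $\sum_{j=1}^{n_{1}}\bOmega_{e_{j},D\times D}$ inside $\bE_{\bullet 1}\bE_{\bullet 1}^{\top}$ so that Assumption \ref{assum_spotvol}(iv) controls the centered residual. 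The deterministic term $\sum_{j}\bOmega_{e_{j}}$ has operator norm $O(n_{1}\varphi_{D})$ by Assumption \ref{assum_spotvol}(v), which after division by $\sqrt{n_{1}D}$ becomes the $\sqrt{n_{1}/D}\,\varphi_{D}$ contribution. The concentration residual yields $\sqrt{\varphi_{D}}$ once combined with Assumption \ref{assum_spotvol}(vi) and the rank-$r$ incoherence from (ii), while the bilinear cross terms $A_{\bullet 1}\bE_{\bullet 1}^{\top}$, bounded using $\|A_{\bullet 1}\|_{2}=O(\sqrt{n_{1}D})$ and $\|\bE_{\bullet 1}\|_{2}=O_{P}(\sqrt{D\vee n_{1}})$, produce the $\sqrt{D/n_{1}}$ term.

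For the second bracket, I would expand it as $\bXi_{\bullet 1}\bSigma_{\bullet 1}^{\top}+\bSigma_{\bullet 1}\bXi_{\bullet 1}^{\top}+\bXi_{\bullet 1}\bXi_{\bullet 1}^{\top}$ and bound each piece using Assumption \ref{assum_spotvol}(iii): the martingale-difference part $\upsilon_{i,j}$ is sub-Gaussian with scale $m^{-1/8}$ (so its operator norm is of lower order under $\log D=o(n_{1})$), and the bias part $\varsigma_{i,j}$ is sub-Gaussian with scale $\rho_{m}$. The dominant contribution is $\|A_{\bullet 1}\|_{2}\|\bXi_{\bullet 1}\|_{F}\lesssim \sqrt{n_{1}D}\cdot\sqrt{n_{1}D}\,\rho_{m}$, which after division by $\sqrt{n_{1}D}$ produces the $\sqrt{n_{1}D}\,\rho_{m}$ term. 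The main obstacle I anticipate is upgrading the entrywise bound in Assumption \ref{assum_spotvol}(iv) to a usable operator-norm bound on the centered residual $\bSigma_{\bullet 1}\bSigma_{\bullet 1}^{\top}-A_{\bullet 1}A_{\bullet 1}^{\top}-\sum_{j}\bOmega_{e_{j}}$: the crude $D\|\cdot\|_{\max}$ inequality is too loose, so I would have to exploit the fixed-rank plus incoherence structure from Assumptions \ref{assum_spotvol}(i)–(ii) to pass from $\|\cdot\|_{\max}$ to the spectral bound needed for Weyl's inequality to yield the claimed rate.
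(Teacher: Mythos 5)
Your plan differs from the paper's proof at the decisive step, and I don't think it can be pushed through as written. The paper does \emph{not} invoke Weyl's inequality on $\hat{\Sigma}_{\bullet 1}\hat{\Sigma}_{\bullet 1}^{\top}-U_{\bullet 1}\Lambda_{\bullet 1}^{2}U_{\bullet 1}^{\top}$; in fact, there is a commented-out Weyl-based argument in the source that the authors evidently abandoned because it produces a $\sqrt{D\log D}$ term rather than $\sqrt{D/n_{1}}$. Instead, the paper writes
\[
|\hat{\lambda}_{\bullet k}^{2}-\lambda_{\bullet k}^{2}|
\leq |\hat{u}_{\bullet k}^{\top}\hat{\Sigma}_{\bullet 1}\hat{\Sigma}_{\bullet 1}^{\top}\hat{u}_{\bullet k}-u_{\bullet k}^{\top}\hat{\Sigma}_{\bullet 1}\hat{\Sigma}_{\bullet 1}^{\top}u_{\bullet k}|
+|u_{\bullet k}^{\top}(\hat{\Sigma}_{\bullet 1}\hat{\Sigma}_{\bullet 1}^{\top}-\Sigma_{\bullet 1}\Sigma_{\bullet 1}^{\top})u_{\bullet k}|
+|u_{\bullet k}^{\top}(\Sigma_{\bullet 1}\Sigma_{\bullet 1}^{\top}-U_{\bullet 1}\Lambda_{\bullet 1}^{2}U_{\bullet 1}^{\top})u_{\bullet k}|,
\]
controls the first piece through a Davis--Kahan $\sin\theta$ bound on $\|\hat{u}_{\bullet k}-u_{\bullet k}\|_{2}$, and bounds the second and third as scalar bilinear forms using the sub-Gaussian/martingale structure in (iii) and the weak-dependence bounds in (v)--(vi). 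The crucial gain is that the double projection onto the incoherent vectors $u_{\bullet k}$ and $v_{\bullet k}$ shrinks the cross term to $u_{\bullet k}^{\top}A_{\bullet 1}E_{\bullet 1}^{\top}u_{\bullet k}=\lambda_{\bullet k}\,v_{\bullet k}^{\top}E_{\bullet 1}^{\top}u_{\bullet k}=O_{P}(\sqrt{n_{1}D\varphi_{D}})$, which becomes $\sqrt{\varphi_{D}}$ after normalisation.

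Your Weyl route cannot reproduce this. Under Weyl you must bound $\|A_{\bullet 1}E_{\bullet 1}^{\top}+E_{\bullet 1}A_{\bullet 1}^{\top}\|_{2}$ by $2\|A_{\bullet 1}\|_{2}\|E_{\bullet 1}\|_{2}=O_{P}(\sqrt{n_{1}D}\,\max(\sqrt{D},\sqrt{n_{1}}))$, which after dividing by $\hat{\lambda}_{\bullet k}+\lambda_{\bullet k}\asymp\sqrt{n_{1}D}$ gives $O_{P}(\max(\sqrt{D},\sqrt{n_{1}}))$, not the $\sqrt{D/n_{1}}$ you assert. Your arithmetic for this term is therefore wrong by roughly a factor of $\sqrt{n_{1}}$, and no amount of replacing the entrywise bound in (iv) with a tighter operator-norm bound will fix it: the slack is in Weyl itself, which is agnostic to the direction of the perturbation, whereas the claimed rate needs the perturbation's action \emph{in the $u_{\bullet k}$ direction}, not its worst-case action. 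You do flag at the end that the $D\|\cdot\|_{\max}$ step is ``too loose,'' which is a correct instinct, but the remedy is to abandon Weyl in favour of the Rayleigh-quotient/quadratic-form decomposition plus a $\sin\theta$ bound for the $(\hat{u}_{\bullet k}-u_{\bullet k})$ correction — a structurally different argument, not a sharpening of the operator-norm estimate.

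Minor additional slips: the claim that ``the concentration residual yields $\sqrt{\varphi_{D}}$'' actually corresponds to the cross term $u_{\bullet k}^{\top}A_{\bullet 1}E_{\bullet 1}^{\top}u_{\bullet k}$ in the paper, not the centered residual governed by (iv); and the $\sqrt{D/n_{1}}$ term does not come from $A_{\bullet 1}E_{\bullet 1}^{\top}$ at all — it arises from $\hat{\lambda}_{\bullet k}^{2}\|\hat{u}_{\bullet k}-u_{\bullet k}\|_{2}^{2}$ via the $\sin\theta$ bound, which your plan has no analogue of.
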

    \begin{proof}
    We first consider (i). 
    We have
    \begin{align}
        &E\|\hat{\Sigma}_{\bullet 1}\hat{\Sigma}_{\bullet 1}^{\top} - \Sigma_{\bullet 1}\Sigma_{\bullet 1}^{\top}\|_{F}^{2} \nonumber \\
        & = \sum_{i=1}^{D}\sum_{s=1}^{D}E\left [ \left (\sum_{j=1}^{n_1}(\hat{c}_{i,j}\hat{c}_{s,j} - c_{i,j}c_{s,j}) \right)^{2} \right ] \nonumber\\
        & = \sum_{i=1}^{D}E \left [ \left  (\sum_{j=1}^{n_1}(\hat{c}_{i,j}^2 - c_{i,j}^2) \right )^{2}  \right ] + \sum_{i=1}^{D}\sum_{s\neq i}^{D}E \left [ \left (\sum_{j=1}^{n_1}(\hat{c}_{i,j}\hat{c}_{s,j} - c_{i,j}c_{s,j})\right )^{2} \right ]  \nonumber\\
        & = \sum_{i=1}^{D}\sum_{j=1}^{n_1}E \left [ (\hat{c}_{i,j}^2 - c_{i,j}^2 )^{2}  \right ] +\sum_{i=1}^{D}\sum_{j=1}^{n_1}\sum_{k\neq j }^{n_1}E \left [ (\hat{c}_{i,j}^2 - c_{i,j}^2 )(\hat{c}_{i,k}^2 - c_{i,k}^2 )  \right ] \nonumber\\
        & \qquad+ \sum_{i=1}^{D}\sum_{s\neq i}^{D}E \left [ \left (\sum_{j=1}^{n_1}(\hat{c}_{i,j}\hat{c}_{s,j} - c_{i,j}c_{s,j})\right )^{2} \right ] \cr
        & := I + II + III  .     \nonumber    
    \end{align}
     For $I$, by substituting $\hat{c}_{i,j} = c_{i,j} + \upsilon_{i,j} + \varsigma_{i,j}$ and using the Cauchy–Schwarz inequality with the sub-Gaussian conditions from Assumption \ref{assum_spotvol}(iii), we can show 
    \begin{align*}
        I & = \sum_{i=1}^{D}\sum_{j=1}^{n_1}E \left [ (\hat{c}_{i,j}^2 - c_{i,j}^2 )^{2}  \right ] = \sum_{i=1}^{D}\sum_{j=1}^{n_1}E \left [ (2c_{i,j}\upsilon_{i,j}+2c_{i,j}\varsigma_{i,j} + \upsilon_{i,j}^{2}+\varsigma_{i,j}^{2}+2\upsilon_{i,j}\varsigma_{i,j})^{2} \right ]\\
        & = O(Dn_1 m^{-1/4}+Dn_1 \rho_{m}^{2}). 
    \end{align*}
     Similarly, we can show that
    \begin{align*}
         II &=\sum_{i=1}^{D}\sum_{j=1}^{n_1}\sum_{k\neq j }^{n_1}E \left [ (\hat{c}_{i,j}^2 - c_{i,j}^2 )(\hat{c}_{i,k}^2 - c_{i,k}^2 )  \right ] \cr
         &\leq  \sum_{i=1}^{D}\sum_{j=1}^{n_1}\sum_{k\neq j }^{n_1} E \left [ (\hat{c}_{i,j}^2 - c_{i,j}^2 ) ^2 \right ] ^{1/2}  E \left [  (\hat{c}_{i,k}^2 - c_{i,k}^2 )^2  \right ]^{1/2}\cr
        &= O(  Dn_1^{2}m^{-1/4} + Dn_1^{2}\rho_{m}^2),
    \end{align*}
    where the Hölder's inequality is used for the first inequality.

    For $III$, we first consider the martingale part only (i.e., $\varsigma_{i,j}$ is zero).
    Then, we can extend the result to the general case.
    Without loss of generality, we assume that $i>s$. 
    We have
    \begin{align*}
        &  \hat{c}_{i,j}\hat{c}_{s,j} - c_{i,j}c_{s,j}  =    ( \hat{c}_{i,j}  - c_{i,j}) (\hat{c}_{s,j} -c_{s,j}) +  ( \hat{c}_{i,j}  - c_{i,j})   c_{s,j}   +   c_{i,j}  (\hat{c}_{s,j}- c_{s,j}). 
    \end{align*}
    Then, for any $j > j'$, using the martingale difference property, we can show
    \begin{align*}
    &E [ (\hat{c}_{i,j}\hat{c}_{s,j} - c_{i,j}c_{s,j})(\hat{c}_{i,j'}\hat{c}_{s,j'} - c_{i,j'}c_{s,j'})]\cr
    &=E [ ( \hat{c}_{i,j'}  - c_{i,j'}) (\hat{c}_{s,j'} -c_{s,j'})  c_{i,j}  (\hat{c}_{s,j}- c_{s,j})] + E [  ( \hat{c}_{i,j'}  - c_{i,j'})   c_{s,j'}  c_{i,j}  (\hat{c}_{s,j}- c_{s,j})] \cr
    &\quad + E [ c_{i,j'}  (\hat{c}_{s,j'}- c_{s,j'})c_{i,j}  (\hat{c}_{s,j}- c_{s,j}) ] \cr
    &= E [ E [ ( \hat{c}_{i,j'}  - c_{i,j'})   c_{i,j}   | \cF_{i,j'}] (\hat{c}_{s,j'} -c_{s,j'})  (\hat{c}_{s,j}- c_{s,j})] \cr
    &\quad + E [  E [ ( \hat{c}_{i,j'}  - c_{i,j'})   c_{i,j}   | \cF_{i,j'}]     c_{s,j'}   (\hat{c}_{s,j}- c_{s,j})]   + E [ E[ c_{i,j'}  c_{i,j}  (\hat{c}_{s,j}- c_{s,j}) | \cF_{s,j}] (\hat{c}_{s,j'}- c_{s,j'})] \cr
    &=0,
        \end{align*}
where the last equality is due to Assumption \ref{assum_spotvol}(iii),
    and we have 
    \begin{align*}
    \sum_{i=1}^{D}\sum_{s\neq i}^{D} \sum_{j=1}^{n_1} E \left [ (\hat{c}_{i,j}\hat{c}_{s,j} - c_{i,j}c_{s,j}) ^2 \right ] = O(D^2 n_1 m^{-1/4}). 
    \end{align*}
For the general case, since $\hat{c}_{i,j}\hat{c}_{s,j} - c_{i,j}c_{s,j}$ can be decomposed into the martingale difference term and the bias term, with their respective effects of $m^{-1/8}$ and $\rho_m$, we have
\begin{align*}
    III &= \sum_{i=1}^{D}\sum_{s\neq i}^{D}E \left [ \left (\sum_{j=1}^{n_1}(\hat{c}_{i,j}\hat{c}_{s,j} - c_{i,j}c_{s,j})\right )^{2} \right ]\\
    & = \sum_{i=1}^{D}\sum_{s\neq i}^{D}\sum_{j=1}^{n_1}E \left [ \left (\hat{c}_{i,j}\hat{c}_{s,j} - c_{i,j}c_{s,j}\right )^{2} \right ]\\
    & \qquad + \sum_{i=1}^{D}\sum_{s\neq i}^{D}\sum_{j=1}^{n_1}\sum_{k\neq j}^{n_1}E \left [ \left (\hat{c}_{i,j}\hat{c}_{s,j} - c_{i,j}c_{s,j}\right ) \left (\hat{c}_{i,k}\hat{c}_{s,k} - c_{i,k}c_{s,k}\right )\right ]\\
    & = O(D^{2}n_1m^{-1/4} + D^{2}n_1\rho_{m}^{2}) + O(D^{2}n_1^{2}m^{-\frac{1}{8}}\rho_{m} + D^2 n_1^2 \rho_m^2)\\
    & = O(D^2 n_1 m^{-1/4} +D^{2}n_1^{2}m^{-1/8}\rho_{m}+ D^2 n_1^2 \rho_m^2),
\end{align*}
where the cross product terms such as $E(c_{i,j}\varsigma_{i,j'}\upsilon_{s,j}c_{s,j'})$ and $E(c_{i,j}c_{i,j'}\varsigma_{s,j}\varsigma_{s,j'})$ yield the rate above.
 Therefore, we have 
    \begin{align}
        E\|\hat{\Sigma}_{\bullet 1}\hat{\Sigma}_{\bullet 1}^{\top} - \Sigma_{\bullet 1}\Sigma_{\bullet 1}^{\top}\|_{F}^{2} = O(Dn_1^{2}m^{-\frac{1}{4}}+ D^{2}n_1m^{-\frac{1}{4}}+ D^{2}n_1^{2}m^{-\frac{1}{8}}\rho_{m} + D^{2}n_1^2\rho_{m}^{2}). \label{fro_norm_sample_cov}
    \end{align}

      We define $U_{\bullet 1} = (u_{\bullet 1},\ldots, u_{\bullet r})$ and $V_{\bullet 1}= (v_{\bullet 1},\ldots, v_{\bullet r})$ as the left and right singular vector matrices of  $A_{\bullet 1} \equiv [A_{11}^{\top} A_{21}^{\top}]^{\top}$, respectively, and $E_{\bullet 1}= [E_{11}^{\top}E_{21}^{\top}]^{\top}$.
      The columns of $\hat{U}_{\bullet 1} = (\hat{u}_{\bullet 1},\ldots, \hat{u}_{\bullet r})$ are defined as the top $r$ eigenvectors of $\hat{\Sigma}_{\bullet 1}\hat{\Sigma}_{\bullet 1}^{\top}$.
       Let $\Lambda_{\bullet 1} = \Diag(\lambda_{\bullet 1}, \dots,\lambda_{\bullet r})$.
      For $k \leq r$, we have
    \begin{align*}
        |\hat{\lambda}_{\bullet k}^{2} - \lambda_{\bullet k}^2| &= |\hat{u}_{\bullet k}^{\top}\hat{\Sigma}_{\bullet 1}\hat{\Sigma}_{\bullet 1}^{\top}\hat{u}_{\bullet k} - u_{\bullet k}^{\top}U_{\bullet 1}\Lambda_{\bullet 1}^{2}U_{\bullet 1}^{\top}u_{\bullet k}|\\
        & \leq |\hat{u}_{\bullet k}^{\top}\hat{\Sigma}_{\bullet 1}\hat{\Sigma}_{\bullet 1}^{\top}\hat{u}_{\bullet k} - u_{\bullet k}^{\top}\hat{\Sigma}_{\bullet 1}\hat{\Sigma}_{\bullet 1}^{\top}u_{\bullet k}| + |u_{\bullet k}^{\top}\hat{\Sigma}_{\bullet 1}\hat{\Sigma}_{\bullet 1}^{\top}u_{\bullet k} - u_{\bullet k}^{\top}\Sigma_{\bullet 1}\Sigma_{\bullet 1}^{\top}u_{\bullet k}|\\
        & \qquad + |u_{\bullet k}^{\top}\Sigma_{\bullet 1}\Sigma_{\bullet 1}^{\top}u_{\bullet k} - u_{\bullet k}^{\top}U_{\bullet 1}\Lambda_{\bullet 1}^{2}U_{\bullet 1}^{\top}u_{\bullet k}|\\
        & := I + II + III.        
    \end{align*}
    We note that,  by using the $\sin(\theta)$ theorem of \cite{davis1970rotation}, for $k\leq r$, we have 
    \begin{align}
        \|\hat{u}_{\bullet k}-u_{\bullet k}\|_{2} &= O\left(\frac{\|\hat{\Sigma}_{\bullet 1}\hat{\Sigma}_{\bullet 1}^{\top} - U_{\bullet 1}\Lambda_{\bullet 1}^{2}U_{\bullet 1}^{\top}\|}{n_{1}D}\right)\nonumber\\
        & = O\left(\frac{\|\hat{\Sigma}_{\bullet 1}\hat{\Sigma}_{\bullet 1}^{\top} - \Sigma_{\bullet 1}\Sigma_{\bullet 1}^{\top}\|_{F} + \| \Sigma_{\bullet 1}\Sigma_{\bullet 1}^{\top}-\sum_{j=1}^{n_1}\bOmega_{j,D\times D}\| +  \|\sum_{j=1}^{n_1}\bOmega_{e_j,D\times D}\|}{n_{1}D}\right)\nonumber\\
        & = O_{P}\left(\sqrt{\frac{1}{Dm^{\frac{1}{4}}} + \frac{1}{n_{1}m^{\frac{1}{4}}} + m^{-\frac{1}{8}}\rho_{m} + \rho_{m}^{2}}\right)+ O_{P}\left(\frac{1}{\sqrt{n_{1}}} + \frac{1}{\sqrt{D}}\right)+ O_{P}\left(\frac{\varphi_{D}}{D}\right)\nonumber\\
        & = O_{P}\left(\sqrt{\frac{\rho_{m}}{m^{\frac{1}{8}}}} + \rho_{m} + \frac{1}{\sqrt{n_{1}}} + \frac{1}{\sqrt{D}}+ \frac{\varphi_{D}}{D}\right),\label{sin theorem}
    \end{align}
    where the third equality follows from \eqref{fro_norm_sample_cov}, Assumption \ref{assum_spotvol}(v), and the following bound:
    \begin{align*}
    &\|\Sigma_{\bullet 1}\Sigma_{\bullet 1}^{\top}-\sum_{j=1}^{n_1}\bOmega_{j,D\times D}\| \leq 2\|U_{\bullet 1}\Lambda_{\bullet 1}V_{\bullet 1}^{\top}E_{\bullet 1}^{\top}\| + \|E_{\bullet 1}E_{\bullet 1}^{\top} - \sum_{j=1}^{n_{1}}\bOmega_{e_{j},D\times D}\| \\
    &\qquad\qquad\qquad = O_{P}(\sqrt{Dn_{1}}\max(\sqrt{D},\sqrt{n_{1}})) + O_{P}(\varphi_{D}\sqrt{Dn_{1}}) = O_{P}(D\sqrt{n_{1}}+n_{1}\sqrt{D}).     
    \end{align*}
    For $I$, we can show
    \begin{align*}
        I & \leq |(\hat{u}_{\bullet k} - u_{\bullet k})^{\top}\hat{\Sigma}_{\bullet 1}\hat{\Sigma}_{\bullet 1}^{\top}(\hat{u}_{\bullet k} - u_{\bullet k})| + 2|\hat{u}_{\bullet k}^{\top}\hat{\Sigma}_{\bullet 1}\hat{\Sigma}_{\bullet 1}^{\top}u_{\bullet k} - \hat{u}_{\bullet k}^{\top}\hat{\Sigma}_{\bullet 1}\hat{\Sigma}_{\bullet 1}^{\top}\hat{u}_{\bullet k}|\\
        & = |(\hat{u}_{\bullet k} - u_{\bullet k})^{\top}\hat{\Sigma}_{\bullet 1}\hat{\Sigma}_{\bullet 1}^{\top}(\hat{u}_{\bullet k} - u_{\bullet k})| + 2|\hat{\lambda}_{\bullet k}^2 \hat{u}_{\bullet k}^{\top}u_{\bullet k} - \hat{\lambda}_{\bullet k}^2|\\
        & \leq \hat{\lambda}_{\bullet k}^2\|\hat{u}_{\bullet k}-u_{\bullet k}\|_{2}^{2} + 2\hat{\lambda}_{\bullet k}^{2}|1-\hat{u}_{\bullet k}^{\top}u_{\bullet k}|\\
        & \leq 2\hat{\lambda}_{\bullet k}^2\|\hat{u}_{\bullet k}-u_{\bullet k}\|_{2}^{2}\\
        & = O_{P}\left( \frac{Dn_{1}\rho_{m}}{m^{\frac{1}{8}}} + Dn_{1}\rho_{m}^2 + D + n_{1} + \frac{n_{1}\varphi_{D}^{2}}{D}\right),
    \end{align*}
    where the last equality is due to \eqref{sin theorem} above.

    We define $\Upsilon = (\upsilon_{i,j})_{D\times n_{1}}$ and $\Theta = (\varsigma_{i,j})_{D\times n_{1}}$. 
    Let $u_{\bullet,ik}$ and $v_{\bullet, jk}$ denote the $(i,k)$ entry of $U_{\bullet 1}$ and the $(j,k)$ entry of $V_{\bullet 1}$, respectively.
    By Assumption \ref{assum_spotvol}(ii)--(iii), we have 
    $$
    E(u_{\bullet k}^\top \Upsilon v_{\bullet k}) = E(\sum_{i=1}^D \sum_{j=1}^{n_{1}} u_{\bullet, ik} v_{\bullet, jk} \upsilon_{i,j}) =  0,
    $$ 
    and for $j<j'$, 
    $$
    E(\upsilon_{i,j} \upsilon_{i,j'}) = E(\upsilon_{i,j}E(\upsilon_{i,j'}|\cF_{i,t_{j'}})) = 0.
    $$
    Therefore, we have
    \begin{align*}
    E[(\sum_{i=1}^D \sum_{j=1}^{n_{1}} u_{\bullet, ik} v_{\bullet, jk} \upsilon_{i,j})^2]&= \sum_{i,i'=1}^D \sum_{j,j'=1}^{n_1} u_{\bullet, ik} u_{\bullet, i'k} v_{\bullet, jk} v_{\bullet, j'k} E[\upsilon_{i,j} \upsilon_{i',j'}] \cr    
    &= \sum_{i=1}^D \sum_{j=1}^{n_{1}} u_{\bullet ik}^2 v_{\bullet, jk}^2 E[\upsilon_{i,j}^2] = O(m^{-1/4}).
    \end{align*}
    Hence, by Chebyshev’s inequality, we obtain 
    $$
    |u_{\bullet k}^\top \Upsilon v_{\bullet k}| = O_{P}(m^{-1/8}).
    $$
    In addition, we have  
    \begin{eqnarray*}
    E(\|u_{\bullet k}^{\top}\Upsilon\|_{2}^{2}) = \sum_{j=1}^{n_{1}}E[(\sum_{i=1}^{D}u_{\bullet, ik}\upsilon_{i,j})^{2}]=\sum_{j=1}^{n_{1}}\sum_{i=1}^{D}u_{\bullet, ik}^2E(\upsilon_{i,j}^{2}) = O(n_{1}m^{-1/4})
    \end{eqnarray*}
    and 
    $$
    E(\|E_{\bullet 1}^{\top}u_{\bullet k}\|_{2}^2) = \sum_{j=1}^{n_{1}}E[(\be_{\cdot j}^{\top}u_{\bullet k})^2] = \sum_{j=1}^{n_{1}}u_{\bullet k}^{\top}\bOmega_{e_j,D\times D}u_{\bullet k} \leq n_{1}\|\bOmega_{e_j,D \times D}\|\|u_{\bullet k}\|^2 = O(n_{1}\varphi_{D}).
    $$
    Moreover, 
    $$
    |u_{\bullet k}^\top \Theta v_{\bullet k}| \leq \sum_{i=1}^{D}\sum_{j=1}^{n_{1}}|u_{\bullet, ik}||v_{\bullet, jk}||\varsigma_{i,j}| \leq \frac{C}{\sqrt{n_{1}D}}\sum_{i=1}^{D}\sum_{j=1}^{n_{1}}|\varsigma_{i,j}| = O_{P}(\sqrt{n_{1}D}\rho_{m}).
    $$
    Since $E(\sum_{i=1}^{D}u_{\bullet, ik}\varsigma_{i,j})\leq E(\sum_{i=1}^{D}|u_{\bullet, jk}||\varsigma_{i,j}|) = O(\sqrt{D}\rho_{m})$, we obtain, by Cauchy-Schwarz inequality, 
    $$
    E(\|u_{\bullet k}^{\top}\Theta\|_{2}^{2}) = \sum_{j=1}^{n_{1}}E[(\sum_{i=1}^{D}u_{\bullet, ik}\varsigma_{i,j})^{2}] \leq \sum_{j=1}^{n_{1}} (E(\sum_{i=1}^{D}u_{\bullet, ik}\varsigma_{i,j}))^{2} = O(n_{1}D\rho_{m}^{2}).
    $$
    Using the facts above, we have
    \begin{align*}
        | u_{\bullet k}^\top \Upsilon \Sigma_{\bullet 1}^\top u_{\bullet k} |  &\leq | u_{\bullet k}^\top \Upsilon V\Lambda U^{\top} u_{\bullet k} | + | u_{\bullet k}^\top \Upsilon E_{\bullet 1}^\top u_{\bullet k} | = | u_{\bullet k}^\top \Upsilon v_{\bullet k}\lambda_{\bullet k} | + | u_{\bullet k}^\top \Upsilon E_{\bullet 1}^\top u_{\bullet k} | \\
        &\leq | u_{\bullet k}^\top \Upsilon v_{\bullet k}|\lambda_{\bullet k}  + \|u_{\bullet k}^\top \Upsilon\|_{2}\|E_{\bullet 1}^\top u_{\bullet k} \|_{2} = O_{P}(\sqrt{n_{1}D}m^{-1/8} + n_{1}m^{-1/8}\sqrt{\varphi_{D}}),\\
        | u_{\bullet k}^\top \Theta \Sigma^\top u_{\bullet k} |  &\leq | u_{\bullet k}^\top \Theta V\Lambda U^{\top} u_{\bullet k} | + | u_{\bullet k}^\top \Theta E_{\bullet 1}^\top u_{\bullet k} |  \leq | u_{\bullet k}^\top \Theta v_{\bullet k}|\lambda_{\bullet k}  + \|u_{\bullet k}^\top \Theta\|_{2}\|E_{\bullet 1}^\top u_{\bullet k} \|_{2} \\
        & = O_{P}(n_{1}D\rho_{m} + n_{1}\rho_{m}\sqrt{D\varphi_{D}})= O_{P}(n_{1}D\rho_{m}),\\
        | u_{\bullet k}^\top \Upsilon \Theta^\top u_{\bullet k} |  & \leq \|u_{\bullet k}^{\top}\Upsilon\|_{2}\|u_{\bullet k}^{\top}\Theta\|_{2} = O_{P}(n_{1}\sqrt{D}m^{-1/8}\rho_{m}),\\
        | u_{\bullet k}^\top \Upsilon \Upsilon^\top u_{\bullet k} |  & = \|u_{\bullet k}^{\top} \Upsilon\|_{2}^2 = O_{P}(n_{1}m^{-1/4}), \text{ and }
        | u_{\bullet k}^\top \Theta \Theta^\top u_{\bullet k} | = \|u_{\bullet k}^{\top} \Theta\|_{2}^2 = O_{P}(n_{1}D\rho_{m}^2).
    \end{align*}
    Therefore, by substituting $\hat{\Sigma}_{\bullet 1} = \Sigma_{\bullet 1} + \Upsilon + \Theta$, we have
    \begin{align*}
        II &= |u_{\bullet k}^{\top}(\hat{\Sigma}_{\bullet 1}\hat{\Sigma}_{\bullet 1}^{\top}-\Sigma_{\bullet 1}\Sigma_{\bullet 1}^{\top})u_{\bullet k}|\\
        &\leq | u_{\bullet k}^\top \Sigma_{\bullet 1} \Upsilon^\top u_{\bullet k} | 
        + | u_{\bullet k}^\top \Sigma_{\bullet 1} \Theta^\top u_{\bullet k} | 
        + | u_{\bullet k}^\top \Upsilon \Sigma_{\bullet 1}^\top u_{\bullet k} | 
        + | u_{\bullet k}^\top \Upsilon \Upsilon^\top u_{\bullet k} | \\
        &\quad 
        + | u_{\bullet k}^\top \Upsilon \Theta^\top u_{\bullet k} | 
        + | u_{\bullet k}^\top \Theta \Sigma_{\bullet 1}^\top u_{\bullet k} | + \left| u_{\bullet k}^\top \Theta \Upsilon^\top u_{\bullet k} \right| 
        + | u_{\bullet k}^\top \Theta \Theta^\top u_{\bullet k} |\\
        & = O_{P}\left(\frac{\sqrt{n_{1}D}}{m^{\frac{1}{8}}} + \frac{n_{1}\sqrt{\varphi_{D}}}{m^{\frac{1}{8}}} + n_{1}D\rho_{m}\right).
    \end{align*}

    Note that, for each $k\leq r$ and $j\leq n$, $\Var(\be_{\cdot j}^{\top}u_{\bullet k}) =  u_{\bullet k}^{\top}\bOmega_{e_j,D\times D}u_{\bullet k} \leq \|\bOmega_{e_j,D \times D}\|\|u_{\bullet k}\|^2= O(\varphi_{D})$ by using Assumption \ref{assum_spotvol}(v).
    Then, we have 
    \begin{align}
        &\Var(v_{\bullet k}^{\top}E_{\bullet 1}^{\top}u_{\bullet k}) = \sum_{j=1}^{n_1}v_{\bullet, jk}^2\Var(\be_{\cdot j}^{\top}u_{\bullet k}) + 2\sum_{1\leq j<l\leq n_1}v_{\bullet, jk}v_{\bullet, lk}\cov(\be_{\cdot j}^{\top}u_{\bullet k},\be_{\cdot l}^{\top}u_{\bullet k})\nonumber\\
        &\qquad\qquad\leq \sum_{j=1}^{n_1}v_{\bullet, jk}^2\Var(\be_{\cdot j}^{\top}u_{\bullet k}) + \sum_{j=1}^{n_1}\sum_{l=1}^{n_1}|v_{\bullet, jk}||v_{\bullet, lk}||\cov(\be_{\cdot j}^{\top}u_{\bullet k},\be_{\cdot l}^{\top}u_{\bullet k})|\nonumber\\
        &\qquad\qquad\leq \sum_{j=1}^{n_1}v_{\bullet, jk}^2\Var(\be_{\cdot j}^{\top}u_{\bullet k}) + 2C\varphi_{D}\sum_{h=1}^{n_{1}}\rho(h)\sum_{j=1}^{n_1-h}|v_{\bullet,jk}||v_{\bullet, (j+h)k}|= O(\varphi_{D}),\label{variance_vEu}
    \end{align}
    where the Cauchy-Schwarz inequality and Assumption \ref{assum_spotvol}(vi) is used for the last equality. 
    Since $E(v_{\bullet k}^{\top}E_{\bullet 1}^{\top}u_{\bullet k}) = 0$, we can obtain $|v_{\bullet k}^{\top}E_{\bullet 1}^{\top}u_{\bullet k}| = O_{P}(\sqrt{\varphi_{D}})$ by Chebyshev's inequality.
    We note that $|u_{\bullet k}^{\top}E_{\bullet 1}E_{\bullet 1}^{\top}u_{\bullet k}| = \|u_{\bullet k}^{\top}E_{\bullet 1}\|_{2}^{2} = O_{P}(n_{1}\varphi_{D})$.
    Therefore, we have
    \begin{align*}
        III &  \leq |u_{\bullet k}^{\top}U_{\bullet 1}\Lambda_{\bullet 1}V_{\bullet 1}^{\top}E_{\bullet 1}^{\top}u_{\bullet k}| + |u_{\bullet k}^{\top}E_{\bullet 1}V_{\bullet 1}\Lambda_{\bullet 1}U_{\bullet 1}^{\top}u_{\bullet k}| + |u_{\bullet k}^{\top}E_{\bullet 1}E_{\bullet 1}^{\top}u_{\bullet k}|\\
        & = |\lambda_{\bullet k}v_{\bullet k}^{\top}E_{\bullet 1}^{\top}u_{\bullet k}| + |u_{\bullet k}^{\top}E_{\bullet 1}v_{\bullet k}\lambda_{\bullet k}| + |u_{\bullet k}^{\top}E_{\bullet 1}E_{\bullet 1}^{\top}u_{\bullet k}| = O_{P}(\sqrt{Dn_{1}\varphi_{D}} + n_{1}\varphi_{D}).
    \end{align*}
    Combining the terms $I$, $II$ and $III$ together, for $k\leq r$, we have
    \begin{align}
          |\hat{\lambda}_{\bullet k}^{2} - \lambda_{\bullet k}^2| & = O_{P}\Bigg(D +n_{1}D\rho_{m} + \sqrt{Dn_{1}\varphi_{D}} + n_{1}\varphi_{D}\Bigg).\label{sinval_sqaure}
    \end{align}
        Therefore, for $k\leq r$, we have
        \begin{align*}
        |\hat{\lambda}_{\bullet k} - \lambda_{\bullet k}| &\leq \frac{|\hat{\lambda}_{\bullet k}^{2} - \lambda_{\bullet k}^2|}{\hat{\lambda}_{\bullet k} + \lambda_{\bullet k}} = O_{P}\Bigg(\sqrt{\frac{D}{n_{1}}} + \sqrt{n_{1}D}\rho_{m} + \sqrt{\varphi_{D}} + \sqrt{\frac{n_{1}}{D}}\varphi_D\Bigg).
        \end{align*}

    Consider (ii).
    Using the similar proof of \eqref{fro_norm_sample_cov}, we can obtain
    \begin{align}\label{fro_norm_sample_cov2}
        E\|\hat{\Sigma}_{1 \bullet }^{\top}\hat{\Sigma}_{1\bullet } - \Sigma_{1 \bullet}^{\top}\Sigma_{1\bullet}\|_{F}^{2} = O(Dn^{2}m^{-\frac{1}{4}}+ D^{2}nm^{-\frac{1}{4}}+ D^{2}n^{2}m^{-\frac{1}{8}}\rho_{m} + D^{2}n^2\rho_{m}^{2}). 
    \end{align}
    Let $\Lambda_{1 \bullet} = \Diag(\lambda_{1 \bullet}, \dots,\lambda_{r \bullet})$.
    Using the similar proof of \eqref{sinval_sqaure}, for $k\leq r$, we can obtain
    \begin{align*}
          |\hat{\lambda}_{k \bullet}^{2} - \lambda_{k \bullet }^2| & = O_{P}\Bigg(n+ nD\rho_{m}  + \sqrt{Dn\varphi_{n}} + D\varphi_{n}\Bigg).
    \end{align*}
        Therefore, for $k\leq r$, we have
        \begin{align*}        
        |\hat{\lambda}_{k \bullet} - \lambda_{k \bullet}| &\leq \frac{|\hat{\lambda}_{k \bullet}^{2} - \lambda_{k \bullet}^{2}|}{\hat{\lambda}_{k \bullet} + \lambda_{k \bullet}} = O_{P}\Bigg(\sqrt{{\frac{n}{D}}} +\sqrt{nD}\rho_{m}  + \sqrt{\varphi_{n}} + \sqrt{\frac{D}{n}}\varphi_n\Bigg).
        \end{align*}
    \end{proof}

We define $U_{\bullet 1}$ and $V_{\bullet 1}$ as the left and right singular vector matrices of  $A_{\bullet 1} \equiv [A_{11}^{\top} A_{21}^{\top}]^{\top}$, and $U_{1 \bullet}$ and $V_{1 \bullet}$ as the left and right singular vector matrices of $A_{1\bullet} \equiv [A_{11} A_{12}]$.
The columns of $\hat{U}_{\bullet 1}$ and $\hat{V}_{\bullet 1}$ are defined as the top $r$ eigenvectors of $\hat{\Sigma}_{\bullet 1}\hat{\Sigma}_{\bullet 1}^{\top}$ and $\hat{\Sigma}_{\bullet 1}^{\top}\hat{\Sigma}_{\bullet 1}$, respectively.
Similarly, the columns of $\hat{U}_{1\bullet }$ and $\hat{V}_{1 \bullet}$ are defined as the top $r$ eigenvectors of $\hat{\Sigma}_{1\bullet}\hat{\Sigma}_{1\bullet}^{\top}$ and $\hat{\Sigma}_{1\bullet}^{\top}\hat{\Sigma}_{1\bullet}$, respectively.

The following lemma presents the individual convergence rate of the singular vector estimators, up to sign. 
We note that the SIP procedure does not require sign alignment; therefore, we focus on the convergence rates of the singular vector estimators, which are equivalent to those of the eigenvector estimators based on the sample covariance matrices.

\begin{lem}\label{sin_vectors}
        Under Assumption \ref{assum_spotvol}, if $\log D = o(n_{1})$ and $\log n = o(D)$, we have the following results:
        \begin{align*}
            \text{(i) } &\|\hat{U}_{\bullet 1} - U_{\bullet 1}\|_{\max} = O_{P}\left(\frac{\rho_{m}^{2}}{\sqrt{D}} + \frac{1}{\sqrt{D} m^{\frac{1}{4}}} + \sqrt{\frac{\log D}{n_{1}D}} + \frac{\varphi_D}{D\sqrt{D}}\right),\\
            \text{(ii) } &\|\hat{V}_{1 \bullet} -V_{1 \bullet}\|_{\max} = O_{P}\left(\frac{\rho_{m}^{2}}{\sqrt{n}}+ \frac{1}{\sqrt{n}m^{\frac{1}{4}}}+ \sqrt{\frac{\log n}{nD}} + \frac{\varphi_n}{n\sqrt{n}}\right),\\
            \text{(iii) } &\|\hat{U}_{1\bullet} - U_{1\bullet}\|_{\max} = O_{P}\left(\frac{\rho_{m}^{2}}{\sqrt{D}} + \frac{1}{\sqrt{D} m^{\frac{1}{4}}} + \sqrt{\frac{\log D}{nD}} + \frac{\varphi_D}{D\sqrt{D}}\right),\\
            \text{(iv) } &\|\hat{V}_{\bullet 1} -V_{\bullet 1}\|_{\max} = O_{P}\left(\frac{\rho_{m}^{2}
            }{\sqrt{n_{1}}}+ \frac{1}{\sqrt{n_{1}}m^{\frac{1}{4}}} + \sqrt{\frac{\log n_{1}}{n_{1}D}} + \frac{\varphi_n}{n_{1}\sqrt{n_{1}}}\right).
            \end{align*}
    \end{lem}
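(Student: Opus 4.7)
The plan is to carry out an entrywise eigenvector perturbation analysis. Since all four statements are symmetric — cases (i) and (iii) are the left-singular-vector analogs (with column dimension $n_{1}$ vs. $n$), while (ii) and (iv) are the right analogs — I would prove (i) in full and indicate how the other three follow by interchanging the roles of $D$ and $n_{1}$ or $n$. Throughout I will use that, by Assumption 3.1(i) and Lemma 1, $\lambda_{\bullet k}^{2}\asymp n_{1}D$ with high probability, and the pervasiveness eigengap allows me to invoke the standard Davis--Kahan-type expansion.

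The starting point is the rotated eigen-equation $\hat{\lambda}_{\bullet k}^{2}\hat{u}_{\bullet k}=\hat{\Sigma}_{\bullet 1}\hat{\Sigma}_{\bullet 1}^{\top}\hat{u}_{\bullet k}$, which I rewrite (up to a sign) as
\begin{equation*}
\hat{u}_{\bullet k}-u_{\bullet k}
=\hat{\lambda}_{\bullet k}^{-2}\bigl(\hat{\Sigma}_{\bullet 1}\hat{\Sigma}_{\bullet 1}^{\top}-\Sigma_{\bullet 1}\Sigma_{\bullet 1}^{\top}\bigr)u_{\bullet k}
+\hat{\lambda}_{\bullet k}^{-2}\Sigma_{\bullet 1}\Sigma_{\bullet 1}^{\top}u_{\bullet k}-u_{\bullet k}
+\hat{\lambda}_{\bullet k}^{-2}\hat{\Sigma}_{\bullet 1}\hat{\Sigma}_{\bullet 1}^{\top}(\hat{u}_{\bullet k}-u_{\bullet k}).
\end{equation*}
The second summand reduces, using $\Sigma_{\bullet 1}\Sigma_{\bullet 1}^{\top}=U_{\bullet 1}\Lambda_{\bullet 1}^{2}U_{\bullet 1}^{\top}+R$ with $\|R\|_{\max}$ handled as in the proof of Lemma 1, to a term dominated by the eigenvalue error together with the noise/bias pieces already quantified in \eqref{sinval_sqaure}. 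The third summand is an absorbed higher-order term: its max-norm is bounded by $\|\hat{u}_{\bullet k}-u_{\bullet k}\|_{2}$ times $\max_{i,j}|(\hat{\Sigma}_{\bullet 1}\hat{\Sigma}_{\bullet 1}^{\top})_{ij}|/\hat{\lambda}_{\bullet k}^{2}$ which, by Assumption 3.1(ii) and equation (3.10) of the excerpt, is of smaller order. So the dominant contribution is the first summand, and my task is to bound $\|\hat{\lambda}_{\bullet k}^{-2}(\hat{\Sigma}_{\bullet 1}\hat{\Sigma}_{\bullet 1}^{\top}-\Sigma_{\bullet 1}\Sigma_{\bullet 1}^{\top})u_{\bullet k}\|_{\max}$.

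Writing $\hat{\Sigma}_{\bullet 1}=\Sigma_{\bullet 1}+\Upsilon_{\bullet 1}+\Theta_{\bullet 1}+E_{\bullet 1}$ (martingale noise, bias, microstructure noise), I would expand $\hat{\Sigma}_{\bullet 1}\hat{\Sigma}_{\bullet 1}^{\top}-\Sigma_{\bullet 1}\Sigma_{\bullet 1}^{\top}$ into cross terms of the shape $\Sigma_{\bullet 1}M_{\bullet 1}^{\top}+M_{\bullet 1}\Sigma_{\bullet 1}^{\top}+M_{\bullet 1}M_{\bullet 1}^{\prime\top}$ for $M,M'\in\{\Upsilon,\Theta,E\}$. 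After multiplication by $u_{\bullet k}$ and extraction of the $i$-th coordinate, each term becomes a sum of the form $\sum_{j=1}^{n_{1}}\sum_{s=1}^{D}(\text{entry})\cdot u_{\bullet,sk}$. For the martingale pieces I would apply the sub-Gaussian concentration in Assumption 3.1(iii) together with the martingale difference orthogonality $E[\upsilon_{i,j}\upsilon_{s,j'}]=0$ for $i\neq s$, producing a variance of order $n_{1}/D$ times $m^{-1/4}$; a union bound over $i\le D$ uses $\log D=o(n_{1})$ to give the max-norm rate $\sqrt{\log D/(n_{1}D)}\cdot m^{-1/4}$, hence the $1/(\sqrt{D}m^{1/4})$ term after dividing by $\lambda_{\bullet k}^{2}$. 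The bias terms contribute $\rho_{m}^{2}/\sqrt{D}$ by deterministic bounding. For the noise matrix $E_{\bullet 1}$, I use the weak-dependence hypothesis in Assumption 3.1(vi) exactly as in \eqref{variance_vEu}: the variance of $v_{\bullet k}^{\top}E_{\bullet 1}^{\top}u_{\bullet k}$ is $O(\varphi_{D})$, which, combined with a Chebyshev/sub-Gaussian max-type bound over $i\le D$, gives the $\sqrt{\log D/(n_{1}D)}$ piece for the first-order cross term $\Sigma_{\bullet 1}E_{\bullet 1}^{\top}u_{\bullet k}$ and the $\varphi_{D}/(D\sqrt{D})$ piece from the quadratic $E_{\bullet 1}E_{\bullet 1}^{\top}u_{\bullet k}$ term (whose entries are sums of $n_{1}$ weakly dependent variables of variance $\varphi_{D}$).

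The main obstacle I anticipate is obtaining the sharp \emph{max}-norm (rather than merely $\ell_{2}$) bound on the noise quadratic term $E_{\bullet 1}E_{\bullet 1}^{\top}u_{\bullet k}$, because Assumption 3.1(vi) only controls weighted first-order projections of the noise; I would handle this by separating the diagonal block (which contributes $n_{1}\varphi_{D}/\sqrt{D}$ after the eigenvalue normalization) from the off-diagonal block (centered and controlled via Hanson--Wright or a direct fourth-moment bound combined with a union bound over $i$). A secondary technical point is the sign/orthogonal-matrix alignment, but as the authors' remark observes, SIP is invariant to the sign of $\hat{u}_{\bullet k}$, so I need only bound $\min_{s\in\{\pm1\}}\|s\hat{u}_{\bullet k}-u_{\bullet k}\|_{\max}$, which simplifies the argument considerably. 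Parts (ii)--(iv) follow from the same decomposition applied to the transposed product $\hat{\Sigma}_{\bullet 1}^{\top}\hat{\Sigma}_{\bullet 1}$ (for $\hat V_{\bullet 1}$), or to $\hat{\Sigma}_{1\bullet}\hat{\Sigma}_{1\bullet}^{\top}$ and $\hat{\Sigma}_{1\bullet}^{\top}\hat{\Sigma}_{1\bullet}$ (for the other two), swapping $(D,n_{1},\varphi_{D})\leftrightarrow(n_{1},D,\varphi_{n})$ or $(D,n,\varphi_{D},\varphi_{n})$ as appropriate and invoking Assumption 3.1(iv)--(vi) in the corresponding direction.
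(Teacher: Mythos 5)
You take a genuinely different route from the paper, and the overall plan is sensible, but as written it has a real gap in the higher-order term and a decomposition slip.

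The paper does not expand the eigen-equation by hand. It first establishes the bound on $\|\hat{\Sigma}_{\bullet 1}\hat{\Sigma}_{\bullet 1}^{\top}-\Sigma_{\bullet 1}\Sigma_{\bullet 1}^{\top}\|_{\max}$ (decomposing $\hat c_{i,j}-c_{i,j}=\upsilon_{i,j}+\varsigma_{i,j}$ and applying Freedman/Bernstein inequalities with a union bound over the $D^2$ entries, exploiting $\log D=o(n_1)$), and then invokes \emph{Theorem 1 of \citet{fan2018eigenvector}} as a black box: under the incoherence condition of Assumption \ref{assum_spotvol}(ii) and the eigengap $\bar\gamma\asymp n_1 D$, one has $\max_k\|\hat u_{\bullet k}-u_{\bullet k}\|_\infty\lesssim\|\hat{\Sigma}_{\bullet 1}\hat{\Sigma}_{\bullet 1}^{\top}-U_{\bullet 1}\Lambda_{\bullet 1}^2 U_{\bullet 1}^\top\|_\infty/(\bar\gamma\sqrt D)$. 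The numerator is then split into three pieces: the $\max$-norm bound just established (via $\|\cdot\|_\infty\le D\|\cdot\|_{\max}$), plus $\|\Sigma_{\bullet 1}\Sigma_{\bullet 1}^{\top}-\sum_j\bOmega_{j,D\times D}\|_{\max}$ from Assumption (iv), plus $\sum_j\|\bOmega_{e_j,D\times D}\|_1$ from Assumption (v). Invoking the ready-made $\ell_\infty$ eigenvector perturbation theorem entirely sidesteps the need to control the bootstrap term $\hat\lambda_{\bullet k}^{-2}\hat\Sigma_{\bullet 1}\hat\Sigma_{\bullet 1}^\top(\hat u_{\bullet k}-u_{\bullet k})$, which is precisely where your proposal becomes vulnerable.

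Concretely, your claim that the third summand is negligible does not close at the claimed rate. The $i$-th coordinate of $\hat\Sigma_{\bullet 1}\hat\Sigma_{\bullet 1}^\top(\hat u_{\bullet k}-u_{\bullet k})$ is bounded, by Cauchy--Schwarz on the $i$-th row, by $\sqrt D\,\|\hat\Sigma_{\bullet 1}\hat\Sigma_{\bullet 1}^\top\|_{\max}\|\hat u_{\bullet k}-u_{\bullet k}\|_2$. Since $\|\hat\Sigma_{\bullet 1}\hat\Sigma_{\bullet 1}^\top\|_{\max}\asymp n_1$ and $\hat\lambda_{\bullet k}^{2}\asymp n_1 D$, this gives $\|\hat u_{\bullet k}-u_{\bullet k}\|_2/\sqrt D$, and from display \eqref{sin theorem} in the paper's proof of Lemma \ref{sin_val} the $\ell_2$ error already contains a $\rho_m$ term, so the resulting bound is of order $\rho_m/\sqrt D$ --- an order of magnitude larger than the target $\rho_m^2/\sqrt D$. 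You would need to iterate the expansion (a fixed-point/bootstrap argument) or use the leave-one-out-style decomposition underlying the Fan--Wang--Zhong result, rather than a single Cauchy--Schwarz step. Separately, the decomposition $\hat{\Sigma}_{\bullet 1}=\Sigma_{\bullet 1}+\Upsilon_{\bullet 1}+\Theta_{\bullet 1}+E_{\bullet 1}$ misplaces $E_{\bullet 1}$: in the model \eqref{Sigma}, $\Sigma_{\bullet 1}=A_{\bullet 1}+E_{\bullet 1}$ already contains the idiosyncratic matrix (and $E$ is not the microstructure noise $e_{i,t_s}$), while the estimation error is only $\Upsilon+\Theta$. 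Writing $E$ as an additive perturbation to $\Sigma$ double-counts it; in the paper, the $E$-related contributions enter exclusively through $\|\Sigma_{\bullet 1}\Sigma_{\bullet 1}^\top-\sum_j\bOmega_{j,D\times D}\|_{\max}$ (Assumption (iv)) and $\sum_j\|\bOmega_{e_j,D\times D}\|_1$ (Assumption (v)), not in $\hat\Sigma\hat\Sigma^\top-\Sigma\Sigma^\top$.
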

    \begin{proof}
    We first consider $(i)$.
    We have
    \begin{align*}
        \|\hat{\Sigma}_{\bullet 1}\hat{\Sigma}_{\bullet 1}^{\top}-\Sigma_{\bullet 1}\Sigma_{\bullet 1}^{\top}\|_{\max} &= \max_{i,s\leq D}|\sum_{j=1}^{n_{1}}(\hat{c}_{i,j}\hat{c}_{s,j}-c_{i,j}c_{s,j})|\\
        &  = \max_{i,s\leq D}|\sum_{j=1}^{n_{1}}(\upsilon_{i,j}c_{s,j} + \upsilon_{s,j}c_{i,j})| + \max_{i,s\leq D}|\sum_{j=1}^{n_{1}}(\varsigma_{i,j}c_{s,j} + \varsigma_{s,j}c_{i,j})| \\
        & \qquad+ \max_{i,s\leq D}|\sum_{j=1}^{n_{1}}(\upsilon_{i,j}\upsilon_{s,j} + \varsigma_{i,j}\varsigma_{s,j} + \upsilon_{i,j}\varsigma_{s,j} + \varsigma_{i,j}\upsilon_{s,j})|\\
        & := I + II + III.
    \end{align*}
    For each $(i,s)$, define $Z_{j}^{(i,s)} = \upsilon_{i,j}c_{s,j}$.
    Recall that both $\upsilon_{i,j}$ and $c_{s,j}$ are sub-Gaussian random variables by Assumption \ref{assum_spotvol}(iii).
    Then, $Z_{j}^{(i,s)}$ is sub-exponential, satisfying
    $$
    \|Z_{j}^{(i,s)}\|_{\psi_{1}} \leq C\|\upsilon_{i,j}\|_{\psi_{2}}\|c_{s,j}\|_{\psi_{2}} = O(m^{-1/8}):=\nu.
    $$
    Since $Z_{j}^{(i,s)}$ forms a martingale difference sequence, we apply a Freedman-type inequality: 
    $$    P\left(\left|\sum_{j=1}^{n_{1}}Z_{j}^{(i,s)}\right| >t \right) \leq 2\exp\left(-c\cdot \min\left(\frac{t^2}{n_{1}\nu^2},\frac{t}{\nu}\right)\right). 
    $$
    Take $t= C m^{-1/8} \sqrt{n_1 \log D}$. 
    Then, we have $\frac{t^2}{n_{1}\nu^2} = O(\log D)$ and $\frac{t}{\nu} = O(\sqrt{n_{1} \log D})$.
    Under the condition $\log D = o(n_{1})$, we have $\min(\frac{t^2}{n_{1}\nu^2},\frac{t}{\nu}) = \frac{t^2}{n_{1}\nu^2}$.
    Then, for some large constant $C>0$, we have
    $$    P\left(\max_{i,s}\left|\sum_{j=1}^{n_{1}}Z_{j}^{(i,s)}\right| >t \right) \leq 2D^{2}\exp\left(-c\cdot \frac{t^2}{n_{1}\nu^2}\right) \leq D^{-1}. 
    $$
    Therefore, we have
    $$
    I \leq 2\cdot \max_{i,s\leq D}\left|\sum_{j=1}^{n_{1}}\upsilon_{i,j}c_{s,j}\right| = O_{P}(m^{-1/8}\sqrt{n_{1}\log D}).
    $$
    Similarly, define $W_{j}^{(i,s)} = \varsigma_{i,j}c_{s,j}$ for each $(i,s)$. 
    Under Assumption \ref{assum_spotvol}(iii),  we have 
    $$
    \|W_{j}^{(i,s)}\|_{\psi_{1}} \leq C\|\varsigma_{i,j}\|_{\psi_{2}}\|c_{s,j}\|_{\psi_{2}} = O(\rho_{m}):=\mu.
    $$
    For each fixed $(i,s)$, we apply Bernstein's inequality:
    $$
    P\left(\left|\sum_{j=1}^{n_{1}}W_{j}^{(i,s)}\right| >t \right) \leq 2\exp\left(-c\cdot \min\left(\frac{t^2}{n_{1}\mu^2},\frac{t}{\mu}\right)\right). 
    $$
    Take $t = C\rho_{m}\sqrt{n_1 \log D}$.
    Since $\log D = o(n_{1})$, it follows that $\min(\frac{t^2}{n_{1}\mu^2},\frac{t}{\mu}) = \frac{t^2}{n_{1}\mu^2}$.
    Then, for some large constant $C>0$, we have
    $$
    P\left(\max_{i,s}\left|\sum_{j=1}^{n_{1}}W_{j}^{(i,s)}\right| >t \right) \leq 2D^{2}\exp\left(-c\cdot \frac{t^2}{n_{1}\mu^2}\right)\leq D^{-1}. 
    $$
    Therefore, we have
    $$
    II \leq 2\cdot \max_{i,s\leq D}\left|\sum_{j=1}^{n_{1}}\varsigma_{i,j}c_{s,j}\right| = O_{P}(\rho_{m}\sqrt{n_{1}\log D}).
    $$
    To bound $III$, we handle the off-diagonal ($i\neq s$) and diagonal ($i =  s$) cases separately.
    We note that for $i\neq s$,
    $E[\upsilon_{i,j}\upsilon_{s,j}]=0$ by Assumption \ref{assum_spotvol}(ii).
    Since $\upsilon_{i,j}$ is sub-Gaussian, we have
    $$
    \|\upsilon_{i,j}\upsilon_{s,j}\|_{\psi_{1}} \leq C\|\upsilon_{i,j}\|_{\psi_{2}}\|\upsilon_{s,j}\|_{\psi_{2}} = O(m^{-1/4}):=\eta_{1}.
    $$
    For each fixed $(i,s)$, we apply Bernstein's inequality for sub-exponential variables:
    $$
    P\left(\left|\sum_{j=1}^{n_{1}}\upsilon_{i,j}\upsilon_{s,j}\right| >t \right) \leq 2\exp\left(-c\cdot \min\left(\frac{t^2}{n_{1}\eta_{1}^2},\frac{t}{\eta_{1}}\right)\right). 
    $$
    Take $t = Cm^{-1/4}\sqrt{n_1 \log D}$.
    Since $\log D = o(n_{1})$, it follows that $\min(\frac{t^2}{n_{1}\eta_{1}^2},\frac{t}{\eta_{1}}) = \frac{t^2}{n_{1}\eta_{1}^2}$.
    Then, for some large constant $C>0$, we have
    $$
    P\left(\max_{i\neq s}\left|\sum_{j=1}^{n_{1}}\upsilon_{i,j}\upsilon_{s,j}\right| >t \right) \leq 2D^{2}\exp\left(-c\cdot \frac{t^2}{n_{1}\eta_{1}^2}\right) \leq D^{-1}.     
    $$
    Therefore, we have
    $$
    \max_{i\neq s}\left|\sum_{j=1}^{n_{1}}\upsilon_{i,j}\upsilon_{s,j}\right| = O(m^{-1/4}\sqrt{n_{1}\log D}).
    $$
    We also note that 
    $$    
    E(\sum_{j=1}^{n_{1}}\upsilon_{i,j}^2) =\sum_{j=1}^{n_{1}}E(\upsilon_{i,j}^2)= O(n_1m^{-1/4}). 
    $$
    For each $i$, we apply Bernstein's inequality:
    $$
    P\left(\left|\sum_{j=1}^{n_{1}}\upsilon_{i,j}^{2}-E\left[\sum_{j=1}^{n_{1}}\upsilon_{i,j}^{2}\right]\right|>t\right) \leq 2 \exp\left(-c\cdot \min\left(\frac{t^2}{n_{1}\eta_{2}^2},\frac{t}{\eta_{2}}\right)\right),
    $$
    where $\eta_{2}:=\|\upsilon_{i,j}^{2}\|_{\psi_{1}} = O(m^{-1/4})$.
    Take $t = Cm^{-1/4}\sqrt{n_1 \log D}$.
    Since $\log D = o(n_{1})$, it follows that $\min(\frac{t^2}{n_{1}\eta_{2}^2},\frac{t}{\eta_{2}}) =  \frac{t^2}{n_{1}\eta_{1}^2}$.
    Then, for some large constant $C>0$, we have 
    $$
    P\left(\max_{i\leq D}\left|\sum_{j=1}^{n_{1}}\upsilon_{i,j}^{2}-E\left[\sum_{j=1}^{n_{1}}\upsilon_{i,j}^{2}\right]\right|>t\right) \leq 2D \exp\left(-c\cdot \frac{t^2}{n_{1}\eta_{2}^2}\right)\leq D^{-1}.
    $$
    Therefore, we have
    $$
    \max_{i\leq D}\left|\sum_{j=1}^{n_{1}}\upsilon_{i,j}^{2}-E\left[\sum_{j=1}^{n_{1}}\upsilon_{i,j}^{2}\right]\right| = O_{P}(m^{-1/4}\sqrt{n_{1}\log D}),
    $$
    and thus,
    $$
    \max_{i\leq D}\left|\sum_{j=1}^{n_{1}}\upsilon_{i,j}^{2}\right| = O_{P}(m^{-1/4}(n_{1} + \sqrt{n_{1}\log D}))= O_{P}(m^{-1/4}n_{1}).
    $$
    By combining the diagonal and off-diagonal terms, we obtain
    \begin{align}\label{upsilon_cross}
    \max_{i,s\leq D}|\sum_{j=1}^{n_{1}}\upsilon_{i,j}\upsilon_{s,j}| = O_{P}(m^{-1/4}n_{1}).    
    \end{align}
    By using the sub-Gaussianity of $\varsigma_{i,j}$ and the same argument as in \eqref{upsilon_cross},  we also obtain
    $$
    \max_{i,s\leq D}|\sum_{j=1}^{n_{1}}\varsigma_{i,j}\varsigma_{s,j}| = O_{P}(\rho_{m}^{2}n_{1}).
    $$
    Lastly, define  $Q_{j}^{(i,s)} = \upsilon_{i,j}\varsigma_{s,j}$ for each $(i,s)$. 
    Then, we have
    $$
    \|Q_{j}^{(i,s)}\|_{\psi_{1}} \leq C\|\upsilon_{i,j}\|_{\psi_{2}}\|\varsigma_{s,j}\|_{\psi_{2}} = O(m^{-1/8}\rho_{m}):=\pi.
    $$
    For each fixed $(i,s)$, we apply a Bernstein's inequality:
    $$
    P\left(\left|\sum_{j=1}^{n_{1}}Q_{j}^{(i,s)}\right| >t \right) \leq 2\exp\left(-c\cdot \min\left(\frac{t^2}{n_{1}\pi^2},\frac{t}{\pi}\right)\right). 
    $$
    Take $t = Cm^{-1/8}\rho_{m}\sqrt{n_{1}\log D}$.
    Since $\log D = o(n_{1})$, it follows that $\min(\frac{t^2}{n_{1}\pi^2},\frac{t}{\pi}) = \frac{t^2}{n_{1}\pi^2}$.
    Then, for some large constant $C>0$, we have
    $$  P\left(\max_{i,s}\left|\sum_{j=1}^{n_{1}}\upsilon_{i,j}\varsigma_{s,j}\right| >t \right) \leq 2D^{2}\exp\left(-c\cdot \frac{t^2}{n_{1}\pi^2}\right) \leq D^{-1}. 
    $$
    Therefore, we have
    $$
    \max_{i,s}\left|\sum_{j=1}^{n_{1}}\upsilon_{i,j}\varsigma_{s,j}\right| = O_{P}(m^{-1/8}\rho_{m}\sqrt{n_{1}\log D}).
    $$
    Hence, we obtain
    \begin{align*}
    III &\leq \max_{i,s\leq D}|\sum_{j=1}^{n_{1}}\upsilon_{i,j}\upsilon_{s,j}| + \max_{i,s\leq D}|\sum_{j=1}^{n_{1}}\varsigma_{i,j}\varsigma_{s,j}| + 2\cdot \max_{i,s\leq D}|\sum_{j=1}^{n_{1}}\upsilon_{i,j}\varsigma_{s,j}|\\
    & = O_{P}(m^{-1/4}n_{1} + \rho_{m}^{2}n_{1} +m^{-1/8}\rho_{m}\sqrt{n_{1}\log D}).       
    \end{align*}
    Combining the bounds for terms $I,II$ and $III$ together, we obtain
    \begin{equation}\label{maxnorm}
     \|\hat{\Sigma}_{\bullet 1}\hat{\Sigma}_{\bullet 1}^{\top}-\Sigma_{\bullet 1}\Sigma_{\bullet 1}^{\top}\|_{\max} = O_{P}\left(m^{-\frac{1}{8}}\sqrt{n_{1}\log D} + \rho_{m}\sqrt{n_{1}\log D}+m^{-\frac{1}{4}}n_{1} + \rho_{m}^2 n_{1}\right).       \end{equation}

        Let $\hat{U}_{\bullet 1} = (\hat{u}_{\bullet 1},\ldots, \hat{u}_{\bullet r})$, $U_{\bullet 1} = (u_{\bullet 1},\ldots, u_{\bullet r})$, and $\Lambda_{\bullet 1} = \Diag(\lambda_{\bullet 1}, \dots,\lambda_{\bullet r})$.
        We denote the eigengap $\bar{\gamma} = \min\{\lambda_{\bullet k}^{2} - \lambda_{\bullet k+1}^{2} : 1 \leq k \leq r\}$ and  $\lambda_{\bullet r+1} = 0$. For $k\leq r$, $\lambda_{\bullet k} \asymp \sqrt{n_{1}D}$  and  $\|u_{\bullet k}\|_{\infty} \leq C/\sqrt{D}$. 
		In addition, the coherence $\mu(U_{\bullet 1}) = D\max_{i}\sum_{k=1}^{r}u_{\bullet,ik}^{2}/r \leq C$, where $u_{\bullet,ik}$ is the $(i,k)$ entry of $U_{\bullet 1}$.
		Thus, by Theorem 1 of \cite{fan2018eigenvector}, we have
        \begin{align*}
            &\max_{k\leq r} \|\hat{u}_{\bullet k}-u_{\bullet k}\|_{\infty}       \leq C\frac{\|\hat{\Sigma}_{\bullet 1}\hat{\Sigma}_{\bullet 1}^{\top}-U_{\bullet 1}\Lambda_{\bullet 1}^{2}U_{\bullet 1}^{\top}\|_{\infty}}{\bar{\gamma}\sqrt{D}}\\
            &\quad \leq C\frac{\|\hat{\Sigma}_{\bullet 1}\hat{\Sigma}_{\bullet 1}^{\top}-\Sigma_{\bullet 1}\Sigma_{\bullet 1}^{\top}\|_{\infty} + \|\Sigma_{\bullet 1}\Sigma_{\bullet 1}^{\top}-\sum_{j=1}^{n_{1}}\bOmega_{j,D\times D}\|_{\infty}+ \|\sum_{j=1}^{n_{1}}\bOmega_{j, D\times D}-U_{\bullet 1}\Lambda_{\bullet 1}^{2}U_{\bullet 1}^{\top}\|_{\infty}}{\bar{\gamma}\sqrt{D}}\\
            &\quad = O\left(\frac{D\|\hat{\Sigma}_{\bullet 1}\hat{\Sigma}_{\bullet 1}^{\top}-\Sigma_{\bullet 1}\Sigma_{\bullet 1}^{\top}\|_{\max}}{n_{1}D^{\frac{3}{2}}} + \frac{D\|\Sigma_{\bullet 1}\Sigma_{\bullet 1}^{\top}-\sum_{j=1}^{n_{1}}\bOmega_{j,D\times D}\|_{\max}}{n_{1}D^{\frac{3}{2}}} + \frac{\sum_{j=1}^{n_{1}}\|\bOmega_{e_j,D\times D}\|_{1}}{n_{1}D^{\frac{3}{2}}}\right)\\
            &\quad =  O_{P}\left(\sqrt{\frac{\log D}{n_{1}Dm^{\frac{1}{4}}}}+ \rho_{m}\sqrt{\frac{\log D}{n_{1}D}} + \frac{1}{\sqrt{D} m^{\frac{1}{4}}} + \frac{\rho_{m}^{2}}{\sqrt{D}} \right)+  O_{P}\left(\sqrt{\frac{\log D}{n_{1}D}}\right) + O_{P}\left(\frac{\varphi_D}{D\sqrt{D}}\right)            \\
            &\quad =  O_{P}\left( \frac{1}{\sqrt{D} m^{\frac{1}{4}}} + \frac{\rho_{m}^{2}}{\sqrt{D}} + \sqrt{\frac{\log D}{n_{1}D}} + \frac{\varphi_D}{D\sqrt{D}}\right),
        \end{align*}   
        where the second equality follows from \eqref{maxnorm} and Assumptions \ref{assum_spotvol}(iv)--(v).
    	By using a similar argument, we can obtain the results (ii)--(iv).

    \end{proof}

 The following lemma presents the individual convergence rate of singular value estimators  for the partitioned matrix $A_{11}$.

\begin{lem}\label{singular value}
Under Assumption \ref{assum_spotvol}, if $\log D = o(n_{1})$, $\log n = o(D)$, $\varphi_{n}\sqrt{D} \leq n_{1}$ and $\varphi_{D} \sqrt{n_{1}} \leq D$, we have
    \begin{align*}
        &\|\hat{U}_{11}^{\top}\hat{\Sigma}_{11}\hat{V}_{11} - \Lambda_{11}\|_{\max} \\
        &\qquad= O_{P}\Bigg(\rho_{m}\sqrt{Dn_{1}} + \frac{\sqrt{Dn_{1}}}{m^{\frac{1}{4}}} + \sqrt{D\log D} + \sqrt{\frac{n_{1}}{D}}\varphi_{D} +  \sqrt{n_{1}\log n_{1}} + \sqrt{\frac{D}{n_{1}}}\varphi_{n}\Bigg).       
    \end{align*}
\end{lem}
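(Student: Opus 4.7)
The plan is to split the $r\times r$ target matrix into four summands with different error sources and to bound each in max-norm using the spectral and entrywise perturbation bounds already developed in Lemmas \ref{sin_val}--\ref{sin_vectors}. With $\hat{\Sigma}_{11}-A_{11}=E_{11}+\Upsilon_{11}+\Theta_{11}$ in the martingale-and-bias notation of Lemma \ref{sin_vectors}, I would start from
\begin{align*}
\hat{U}_{11}^{\top}\hat{\Sigma}_{11}\hat{V}_{11}-\Lambda_{11}
=\hat{U}_{11}^{\top}E_{11}\hat{V}_{11}+\hat{U}_{11}^{\top}\Upsilon_{11}\hat{V}_{11}+\hat{U}_{11}^{\top}\Theta_{11}\hat{V}_{11}+\bigl(\hat{U}_{11}^{\top}A_{11}\hat{V}_{11}-\Lambda_{11}\bigr),
\end{align*}
and treat each of the four pieces separately.

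For the bias and martingale summands I would use orthonormality of the columns of $\hat U_{11}$ and $\hat V_{11}$: $\|\hat U_{11}^{\top}\Theta_{11}\hat V_{11}\|_{\max}\leq\|\Theta_{11}\|_F$, and analogously for $\Upsilon_{11}$. The sub-Gaussian second-moment estimates $E\|\Theta_{11}\|_F^{2}=O(Dn_1\rho_m^{2})$ and $E\|\Upsilon_{11}\|_F^{2}=O(Dn_1 m^{-1/4})$ from Assumption \ref{assum_spotvol}(iii) then deliver the first two declared terms $\rho_m\sqrt{Dn_1}$ and $\sqrt{Dn_1}/m^{1/4}$. For the noise summand I would expand
\begin{align*}
\hat U_{11}^{\top}E_{11}\hat V_{11}
&=U_{11}^{\top}E_{11}V_{11}+(\hat U_{11}-U_{11})^{\top}E_{11}V_{11}\\
&\qquad+U_{11}^{\top}E_{11}(\hat V_{11}-V_{11})+(\hat U_{11}-U_{11})^{\top}E_{11}(\hat V_{11}-V_{11});
\end{align*}
the bilinear centre has entries with variance $O(\varphi_D)$ by Assumption \ref{assum_spotvol}(vi) and is $O_P(\sqrt{\varphi_D})$, absorbed by the larger $\sqrt{D\log D}$ appearing in the alignment step, while the three cross terms are controlled by combining $\|\bE\|=O_P(\max(\sqrt{D},\sqrt{n}))$ from Assumption \ref{assum_spotvol}(v) with the entrywise rates of Lemma \ref{sin_vectors}(ii)--(iii) (converted to $\ell_2$-norm bounds via $\|x\|_2\leq\sqrt{D}\|x\|_\infty$ or $\leq\sqrt{n_1}\|x\|_\infty$).

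The hard part will be the alignment summand $\hat U_{11}^{\top}A_{11}\hat V_{11}-\Lambda_{11}$, because $\hat U_{11}$ and $\hat V_{11}$ are the leading singular vectors of $\hat\Sigma_{1\bullet}$ and $\hat\Sigma_{\bullet 1}$ rather than of $\hat\Sigma_{11}$, so they target the bases $U_{1\bullet},V_{\bullet 1}$ rather than $U_{11},V_{11}$ from the SVD of $A_{11}$; the two sets of bases span the same column spaces but differ by $r\times r$ rotations. After fixing these rotations, I substitute $A_{11}=U_{11}\Lambda_{11}V_{11}^{\top}$, write
\begin{align*}
\hat U_{11}^{\top}A_{11}\hat V_{11}-\Lambda_{11}
=(\hat U_{11}^{\top}U_{11}-I_r)\Lambda_{11}(V_{11}^{\top}\hat V_{11})+\Lambda_{11}(V_{11}^{\top}\hat V_{11}-I_r),
\end{align*}
and then use $\|\hat U_{11}^{\top}U_{11}-I_r\|_F\leq\|\hat U_{11}-U_{11}\|_F\leq\sqrt{(D-1)r}\,\|\hat U_{11}-U_{11}\|_{\max}$ together with the analogous bound for $\hat V_{11}$. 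Multiplying by $\|\Lambda_{11}\|\asymp\sqrt{Dn_1}$ and plugging in the entrywise rates of Lemma \ref{sin_vectors}(ii)--(iii) then reproduces the remaining four terms $\sqrt{D\log D}$, $\sqrt{n_1\log n_1}$, $\sqrt{n_1/D}\,\varphi_D$ and $\sqrt{D/n_1}\,\varphi_n$. The delicate point is that the large factor $\sqrt{Dn_1}=\|\Lambda_{11}\|$ amplifies every component of the singular-vector error, so matching the stated rate relies crucially on the fact that each term in Lemma \ref{sin_vectors}(ii)--(iii) carries a compensating $1/\sqrt{D}$ (or $1/\sqrt{n_1}$); some care is also needed not to double-count the $\rho_m\sqrt{Dn_1}$ and $\sqrt{Dn_1}/m^{1/4}$ contributions, which arise both in the $\Theta_{11},\Upsilon_{11}$ summands and through the $\rho_m^2/\sqrt{D}$ and $1/(\sqrt{D}m^{1/4})$ components of the singular-vector perturbation in the alignment summand.
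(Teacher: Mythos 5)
Your overall decomposition (noise $+$ martingale $+$ bias $+$ alignment) is equivalent to the paper's $\Delta_1 + \Delta_2 + \Delta_3 + \Delta_4$ split, and your treatment of the alignment summand and the bias summand $\Theta_{11}$ is essentially the paper's argument. However, your bound on the martingale summand contains a genuine error that would break the rate.

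You claim that bounding $\|\hat U_{11}^{\top}\Upsilon_{11}\hat V_{11}\|_{\max}$ by $\|\Upsilon_{11}\|_{F}$ and using $E\|\Upsilon_{11}\|_F^{2}=O(Dn_1 m^{-1/4})$ ``delivers'' the declared term $\sqrt{Dn_{1}}/m^{1/4}$. But $\sqrt{Dn_1 m^{-1/4}}=\sqrt{Dn_1}\,m^{-1/8}$, not $\sqrt{Dn_1}\,m^{-1/4}$; the Frobenius bound gives $O_P(\sqrt{Dn_1}\,m^{-1/8})$, which strictly dominates the stated rate of the lemma and therefore cannot yield it. The Frobenius bound is the right tool for the bias matrix $\Theta_{11}$, whose entries have no cancellation, but it throws away precisely the structure that makes the martingale summand small. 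The paper instead keeps the bilinear form: an entry of $U_{11}^{\top}\Upsilon_{11}V_{11}$ is $S_{kl}=\sum_{i,j}u_{i,k}\upsilon_{i,j}v_{j,l}$, which has mean zero and variance $\sum_{i,j}u_{i,k}^{2}v_{j,l}^{2}E[\upsilon_{i,j}^{2}]=O(m^{-1/4})$ because the $\upsilon_{i,j}$ are uncorrelated (martingale differences in $j$, uncorrelated across $i$ by Assumption \ref{assum_spotvol}(iii)) and because $\sum_i u_{i,k}^2=\sum_j v_{j,l}^2=1$. This gives $|S_{kl}|=O_P(m^{-1/8})$ — an order-$\sqrt{Dn_1}$ improvement over the Frobenius bound — and this term is then absorbed by the alignment summand, which is actually where the $\sqrt{Dn_1}/m^{1/4}$ in the lemma arises (through the factor $\|\Lambda_{11}\|\asymp\sqrt{Dn_1}$ times the $1/(\sqrt{D}m^{1/4})$ component of the singular-vector error). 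The residual cross terms $(\hat U_{11}-U_{11})^{\top}\Upsilon_{11}\hat V_{11}$, etc., must still be handled, which the paper does via the $\kappa_u,\kappa_v$ quantities; your $\|\bE\|$-based argument only addresses the microstructure-noise matrix $E_{11}$ and would need an analogous argument for $\Upsilon_{11}$ and $\Theta_{11}$ (for which no operator-norm assumption is provided).

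So the fix is: do not apply the Frobenius norm to $\Upsilon_{11}$; instead use the mean-zero uncorrelated structure to show $\|U_{11}^{\top}\Upsilon_{11}V_{11}\|_{\max}=O_P(m^{-1/8})$, and then bound the remaining perturbed cross terms using the entrywise singular-vector rates of Lemma \ref{sin_vectors}, as the paper does in its $\Delta_1$ term.
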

\begin{proof}

   Since $\Lambda_{11} = U_{11}^{\top}A_{11}V_{11}$, we have 
\begin{align*}
        \|\hat{U}_{11}^{\top}\hat{\Sigma}_{11}\hat{V}_{11} - \Lambda_{11}\|_{\max}   &= \|\hat{U}_{11}^{\top}\hat{\Sigma}_{11}\hat{V}_{11} - U_{11}^{\top}A_{11}V_{11}\|_{\max}\\
        & \leq \|\hat{U}_{11}^{\top}(\hat{\Sigma}_{11}-A_{11})\hat{V}_{11}\|_{\max} + \|(\hat{U}_{11}-U_{11})^{\top}A_{11}(\hat{V}_{11}-V_{11})\|_{\max} \\
        & \qquad + \|(\hat{U}_{11}-U_{11})^{\top}A_{11}V_{11}\|_{\max} + \|U_{11}^{\top}A_{11}(\hat{V}_{11}-V_{11})\|_{\max} \\
        & := \Delta_{1} + \Delta_{2} + \Delta_{3} + \Delta_{4}.
    \end{align*}
In order to find the convergence rate of elementwise norm, we consider the above terms separately. 
For $\Delta_{1}$, we have   
   \begin{align*}
        \Delta_{1}  &= \|\hat{U}_{11}^{\top}(\hat{\Sigma}_{11}-\Sigma_{11} + \Sigma_{11}-A_{11})\hat{V}_{11}\|_{\max}\\
        & \leq \|\hat{U}_{11}^{\top}(\hat{\Sigma}_{11}-\Sigma_{11})\hat{V}_{11}\|_{\max} + \|\hat{U}_{11}^{\top}E_{11}\hat{V}_{11}\|_{\max}\\
        &  \leq \|\hat{U}_{11}^{\top}(\hat{\Sigma}_{11}-\Sigma_{11})\hat{V}_{11}\|_{\max} + \|U_{11}^{\top}E_{11}V_{11}\|_{\max} + \|(\hat{U}_{11} - U_{11})^{\top}E_{11}(\hat{V}_{11} - V_{11})\|_{\max}\\
        &  \qquad+ \|(\hat{U}_{11}-U_{11})^{\top}E_{11}V_{11}\|_{\max} + \|U_{11}^{\top}E_{11}(\hat{V}_{11}-V_{11})\|_{\max}\\
        &  := I + II + III+IV+V. 
    \end{align*}
Let $U_{11} = (u_{i,k})_{(D-1) \times r}$ and $V_{11} = (v_{j,k})_{n_1\times r}$. 
For each $(k,l)$, define $S_{kl} = \sum_{i=1}^{D-1}\sum_{j=1}^{n_1}u_{i,k}\upsilon_{i,j}v_{j,l}$.
We note that, by Assumption \ref{assum_spotvol}(ii)--(iii), $E(\upsilon_{i,j}\upsilon_{i',j'}) =0$ unless $i=i'$ and $j=j'$.
Then, for any $k,l\leq r$, we have 
$$
E(S_{kl}) = \sum_{i=1}^{D-1}\sum_{j=1}^{n_1}u_{i,k}v_{j,l}E(\upsilon_{i,j}) = 0
$$
and
$$
  \Var(S_{kl}) =  \sum_{i=1}^{D-1}\sum_{j=1}^{n_{1}}u_{i,k}^{2}v_{j,l}^{2}E[\upsilon_{i,j}^2] = O(m^{-1/4}).
$$
Then, by Chebyshev's inequality, we have
$$
|\sum_{i=1}^{D-1}\sum_{j=1}^{n_1}u_{i,k}\upsilon_{i,j}v_{j,l}| = O_{P}(m^{-1/8}).
$$
By Assumption \ref{assum_spotvol}(ii)--(iii), we have
$$
|\sum_{i=1}^{D-1}\sum_{j=1}^{n_1}u_{i,k}\varsigma_{i,j}v_{j,l}| \leq   \sum_{i=1}^{D-1}\sum_{j=1}^{n_1} |u_{i,k}| |\varsigma_{i,j}| |v_{j,l}| = O_{P}(\rho_{m}\sqrt{Dn_{1}}).
$$
Since $r$ is fixed, we have
\begin{align*}
\|U_{11}^{\top}(\hat{\Sigma}_{11}-\Sigma_{11})V_{11}\|_{\max} &\leq \max_{k,l\leq r}|\sum_{i=1}^{D-1}\sum_{j=1}^{n_1}u_{i,k}\upsilon_{i,j}v_{j,l}| + \max_{k,l\leq r}|\sum_{i=1}^{D-1}\sum_{j=1}^{n_1}u_{i,k}\varsigma_{i,j}v_{j,l}|\\
&= O_{P}(m^{-1/8} + \rho_{m}\sqrt{Dn_{1}}).   
\end{align*}
Therefore, we have 
$$
I = O_{P}(\|U_{11}^{\top}(\hat{\Sigma}_{11}-\Sigma_{11})V_{11}\|_{\max}) = O_{P}(m^{-1/8} + \rho_{m}\sqrt{Dn_{1}}).
$$
Using the same argument as \eqref{variance_vEu}, we have 
$$
\Var(\sum_{i=1}^{D-1}\sum_{j=1}^{n_{1}}u_{i,k}v_{j,l}e_{i,j}) = O(\varphi_{D})
$$ 
and hence, 
$$|\sum_{i=1}^{D-1}\sum_{j=1}^{n_{1}}u_{i,k}v_{j,l}e_{i,j}| = O_{P}(\sqrt{\varphi_{D}}).
$$
Since $r$ is fixed, we can apply a union bound over $r^2$ terms without affecting the rate.
Therefore, we have
$$
II  = O(\sqrt{\varphi_{D}}).
$$
We note that $\hat{U}_{11}$ and $\hat{U}_{1\bullet}$ are equal up to column-wise sign changes; the same holds for $\hat{V}_{11}$ and $\hat{V}_{\bullet 1}$.
We define $Q_{k,j} = \sum_{i=1}^{D-1}u_{i,k}e_{i,j}$.
We note that 
$$
\Var(Q_{k,j}) = u_{k}^{\top}\bOmega_{e_{j},D\times D}u_{k} \leq \max_{j\leq n}\|\bOmega_{e_{j},D\times D}\| \|u_{k}\|^2 =  O(\varphi_{D}).
$$
Using the bounded forth moment condition in Assumption \ref{assum_spotvol}(vi) and Markov's inequality, we have 
$$
P(|Q_{k,j}|>t) \leq \frac{C\varphi_{D}^2}{t^{4}}.
$$
By applying the union bound over all $(k,j)$ pairs, we then have
$$
P(\max_{k,j}|Q_{k,j}| > t) \leq Crn_{1}\cdot\frac{\varphi_{D}^2}{t^4}.
$$
Therefore, we have
$$
\|U_{11}^{\top}E_{11}\|_{\max} = O_{P}(\varphi_{D}^{1/2}n_{1}^{1/4}) := O_{P}(\kappa_{u}).
$$ 
Similarly, we have $\|E_{11}V_{11}\|_{\max} = O_{P}(\varphi_{n}^{1/2}D^{1/4}) := O_{P}(\kappa_{v})$.
Then, by Lemma \ref{sin_vectors} (iii)--(iv), we can obtain
\begin{align*}
    IV & = O_{P}(\kappa_{v}D\|\hat{U}_{1\bullet}-U_{1\bullet}\|_{\max}) \\
        &= O_{P}\left(\kappa_{v}\sqrt{D}\left(\rho_{m}^2+\frac{1}{m^{\frac{1}{4}}}\right) + \kappa_{v}\sqrt{\frac{D\log D}{n}} + \frac{\kappa_{v}\varphi_D}{\sqrt{D}}\right),\\
      V & = O_{P}(\kappa_{u}n_{1} \|\hat{V}_{\bullet 1}-V_{\bullet 1}\|_{\max})\\
        &= O_{P}\left(\kappa_{u}\sqrt{n_{1}}\left(\rho_{m}^2+\frac{1}{m^{\frac{1}{4}}}\right)+ \kappa_{u}\sqrt{\frac{n_{1}\log n_{1}}{D}} + \frac{\kappa_{u}\varphi_n}{\sqrt{n_{1}}}\right).
\end{align*}
Similarly, we can show that the term $III$ is dominated by $IV$ and $V$.
Therefore, we have
\begin{align*}
\Delta_{1} &= O_{P}\Bigg(\rho_{m}\sqrt{Dn_{1}} + \sqrt{\varphi_{D}} + \kappa_{v}\sqrt{D}\left(\rho_{m}^2+\frac{1}{m^{\frac{1}{4}}}\right) + \kappa_{v}\sqrt{\frac{D\log D}{n}} + \frac{\kappa_{v}\varphi_D}{\sqrt{D}}\\
& \qquad\qquad +\kappa_{u}\sqrt{n_{1}}\left(\rho_{m}^2+\frac{1}{m^{\frac{1}{4}}}\right)+ \kappa_{u}\sqrt{\frac{n_{1}\log n_{1}}{D}} + \frac{\kappa_{u}\varphi_n}{\sqrt{n_{1}}}\Bigg).
\end{align*}
For $\Delta_3$ and $\Delta_4$, by Lemma \ref{sin_vectors} (iii)--(iv), we have
\begin{align*}
        \Delta_3 & = \|(\hat{U}_{11}-U_{11})^{\top}U_{11}\Lambda_{11}\|_{\max} = O_{P}(D \sqrt{n_{1}}\|\hat{U}_{1\bullet}-U_{1\bullet}\|_{\max})\\
        & = O_{P}\left(\sqrt{n_{1}D}\left(\rho_{m}^2+\frac{1}{m^{\frac{1}{4}}}\right) + \sqrt{\frac{n_{1}D\log D}{n}} + \varphi_D\sqrt{\frac{n_{1}}{D}}\right),\\
        \Delta_4 & = \|\Lambda_{11}V_{11}^{\top}(\hat{V}_{11}-V_{11})\|_{\max} = O_{P}(n_{1}\sqrt{D}\|\hat{V}_{\bullet 1} - V_{\bullet 1}\|_{\max})\\
        & = O_{P}\left(\sqrt{n_{1}D}\left(\rho_{m}^2+\frac{1}{m^{\frac{1}{4}}}\right) + \sqrt{n_{1}\log n_{1}} + \varphi_n\sqrt{\frac{D}{n_{1}}}\right).
    \end{align*}
Similarly, we can show $\Delta_{2}$ is dominated by $\Delta_{3}$ and $\Delta_{4}$.
Combining the terms $\Delta_{1}$, $\Delta_{3}$ and $\Delta_{4}$ together, we have
     \begin{align*}
         &\|\hat{U}_{11}^{\top}\hat{\Sigma}_{11}\hat{V}_{11} - \Lambda_{11}\|_{\max} \\
         &\qquad= O_{P}\Bigg(\rho_{m}\sqrt{Dn_{1}} + \sqrt{\varphi_{D}} + \max(\kappa_{v}, \sqrt{n_{1}})\Bigg(\sqrt{D}\left(\rho_{m}^2+\frac{1}{m^{\frac{1}{4}}}\right)+\sqrt{\frac{D\log D}{n}} + \frac{\varphi_D}{\sqrt{D}}\Bigg)\\
        & \qquad\qquad\qquad  + \max(\kappa_{u}, \sqrt{D})\Bigg(\sqrt{n_{1}}\left(\rho_{m}^2+\frac{1}{m^{\frac{1}{4}}}\right) + \sqrt{\frac{n_{1}\log n_{1}}{D}} + \frac{\varphi_n}{\sqrt{n_{1}}}\Bigg)\Bigg)\\
        &\qquad= O_{P}\Bigg(\rho_{m}\sqrt{Dn_{1}} + \frac{\sqrt{Dn_{1}}}{m^{\frac{1}{4}}} + \sqrt{D\log D} + \sqrt{\frac{n_{1}}{D}}\varphi_{D} +  \sqrt{n_{1}\log n_{1}} + \sqrt{\frac{D}{n_{1}}}\varphi_{n}\Bigg).
    \end{align*}

\end{proof}

    \subsection{Proof of Theorem \ref{main_thm}}
We recall that $\tilde{\Sigma}_{22}  =\hat{\Sigma}_{21}\hat{V}_{11}(\hat{U}_{11}^{\top}\hat{\Sigma}_{11}\hat{V}_{11})^{-1}\hat{U}_{11}^{\top}\hat{\Sigma}_{12}= \hat{A}_{21}\hat{V}_{11}(\hat{U}_{11}^{\top}\hat{\Sigma}_{11}\hat{V}_{11})^{-1}\hat{U}_{11}^{\top}\hat{A}_{12}$. 
By Lemmas \ref{sin_val}--\ref{sin_vectors}, we can show
\begin{align*}
        \frac{1}{\sqrt{n_1}}\|\hat{A}_{21}\hat{V}_{11} - A_{21}V_{11}\|_{\max} & \leq \frac{1}{\sqrt{n_1}}\|\hat{A}_{\bullet 1}\hat{V}_{11} - A_{\bullet 1}V_{11}\|_{\max}  = \frac{1}{\sqrt{n_1}}\|\hat{U}_{\bullet 1}\hat{\Lambda}_{\bullet 1} - U_{\bullet 1}\Lambda_{\bullet 1}\|_{\max}\\
        & \leq \frac{1}{\sqrt{n_1}}\|\hat{U}_{\bullet 1}(\hat{\Lambda}_{\bullet 1} - \Lambda_{\bullet 1})\|_{\max} + \frac{1}{\sqrt{n_1}}\|(\hat{U}_{\bullet 1} - U_{\bullet 1})\Lambda_{\bullet 1}\|_{\max}\\
        & = O_{P}\left(\frac{1}{\sqrt{n_{1}D}}\|\hat{\Lambda}_{\bullet 1} - \Lambda_{\bullet 1}\|_{\max} + \sqrt{D}\|\hat{U}_{\bullet 1} - U_{\bullet 1}\|_{\max} \right)\\
        &= O_{P}\left(\rho_{m} +\frac{\varphi_{D}}{D} + \frac{1}{m^{\frac{1}{4}}} +   \sqrt{\frac{\log D}{n_1}}\right)
    \end{align*}
and
    \begin{align*}
        \frac{1}{\sqrt{D}}\|\hat{U}_{11}^{\top}\hat{A}_{12} - U_{11}^{\top}A_{12}\|_{\max} & \leq \frac{1}{\sqrt{D}}\|\hat{U}_{11}^{\top}\hat{A}_{1\bullet} - U_{11}^{\top}A_{1\bullet}\|_{\max} = \frac{1}{\sqrt{D}}\|\hat{\Lambda}_{1\bullet}\hat{V}_{1\bullet}^{\top} - \Lambda_{1\bullet}V_{1\bullet}^{\top}\|_{\max}\\
        & \leq \frac{1}{\sqrt{D}}\|(\hat{\Lambda}_{1\bullet} - \Lambda_{1\bullet})\hat{V}_{1\bullet}^{\top}\|_{\max}+ \frac{1}{\sqrt{D}}\|\Lambda_{1\bullet}(\hat{V}_{1\bullet} -V_{1\bullet})^{\top}\|_{\max} \\
        & = O_{P}\left(\frac{1}{\sqrt{nD}}\|\hat{\Lambda}_{1\bullet } - \Lambda_{1\bullet}\|_{\max} + \sqrt{n}\|\hat{V}_{1\bullet} - V_{1\bullet}\|_{\max} \right)\\
        &= O_{P}\left(\rho_m + \frac{\varphi_{n}}{n}+ \frac{1}{m^{\frac{1}{4}}} + \sqrt{\frac{\log n}{D}}\right).
    \end{align*}
By Lemma \ref{singular value}, we can obtain
\begin{align*}
        &\|(\hat{U}_{11}^{\top}\hat{\Sigma}_{11}\hat{V}_{11})^{-1} - \Lambda_{11}^{-1}\|_{\max}   = O_{P}\left(\frac{1}{n_{1}D}\|\hat{U}_{11}^{\top}\hat{\Sigma}_{11}\hat{V}_{11} - \Lambda_{11}\|_{\max}\right)\\
        & \qquad = O_{P}\Bigg(  \frac{\rho_{m}}{\sqrt{n_{1}D}} +  \frac{1}{\sqrt{n_{1}D}m^{\frac{1}{4}}} + \sqrt{\frac{\log D}{n_{1}^2 D}} + \frac{\varphi_D}{\sqrt{n_{1}D^3}}+ \sqrt{\frac{\log n_{1}}{n_{1}D^2}} + \frac{\varphi_n}{\sqrt{n_{1}^3 D}}\Bigg).   
    \end{align*}
By using the above results, we have
  \begin{align*}
        &\max_{j\leq n_2}\left|\tilde{c}_{D,n_1+j} - E[c_{D,n_1+j} | \FF_{D,n_1}]\right|  = \|\hat{\Sigma}_{21}\hat{V}_{11}(\hat{U}_{11}^{\top}\hat{\Sigma}_{11}\hat{V}_{11})^{-1}\hat{U}_{11}^{\top}\hat{\Sigma}_{12} - A_{21}(V_{11}\Lambda_{11}^{-1}U_{11}^{\top})A_{12}\|_{\infty}\\
        & \quad = \|\hat{A}_{21}\hat{V}_{11}(\hat{U}_{11}^{\top}\hat{\Sigma}_{11}\hat{V}_{11})^{-1}\hat{U}_{11}^{\top}\hat{A}_{12} - A_{21}(V_{11}\Lambda_{11}^{-1}U_{11}^{\top})A_{12}\|_{\infty}\\
            & \quad \leq \|\hat{A}_{21}\hat{V}_{11}((\hat{U}_{11}^{\top}\hat{\Sigma}_{11}\hat{V}_{11})^{-1} - \Lambda_{11}^{-1})\hat{U}_{11}^{\top}\hat{A}_{12}\|_{\infty} + \|(\hat{A}_{21}\hat{V}_{11} - A_{21}V_{11})\Lambda_{11}^{-1}(\hat{U}_{11}^{\top}\hat{A}_{12} - U_{11}^{\top}A_{12})\|_{\infty}\\
            & \qquad\quad+ \|A_{21}V_{11}\Lambda_{11}^{-1}(\hat{U}_{11}^{\top}\hat{A}_{12} - U_{11}^{\top}A_{12})\|_{\infty} + \|(\hat{A}_{21}\hat{V}_{11} - A_{21}V_{11})\Lambda_{11}^{-1} U_{11}^{\top}A_{12})\|_{\infty}
            \\
            & \quad = O_{P}\Bigg(\sqrt{n_{1}D}\|(\hat{U}_{11}^{\top}\hat{\Sigma}_{11}\hat{V}_{11})^{-1} - \Lambda_{11}^{-1}\|_{\max} + \frac{1}{\sqrt{n_1}}\|\hat{A}_{21}\hat{V}_{11} - A_{21}V_{11}\|_{\max} \\
            & \qquad\qquad + \frac{1}{\sqrt{D}}\|\hat{U}_{11}^{\top}\hat{A}_{12} - U_{11}^{\top}A_{12}\|_{\max}\Bigg)\\
            & \quad= O_{P}\Bigg(\rho_{m} +  \frac{1}{m^{\frac{1}{4}}} + \sqrt{\frac{\log D}{n_{1}}} + \frac{\varphi_D}{D}+ \sqrt{\frac{\log n_{1}}{D}} + \frac{\varphi_n}{n_{1}} + \rho_{m} +\frac{\varphi_{D}}{D} + \frac{1}{m^{\frac{1}{4}}} +   \sqrt{\frac{\log D}{n_1}}\\
        & \qquad\qquad\qquad + \rho_{m} +\frac{\varphi_{n}}{n}+ \frac{1}{m^{\frac{1}{4}}} + \sqrt{\frac{\log n}{D}} \Bigg)\\
        & \quad= O_{P}\Bigg( \rho_{m} + \frac{1}{m^{\frac{1}{4}}} +  \frac{\varphi_{D}}{D} + \sqrt{\frac{\log D}{n_1}}+    \frac{\varphi_{n}}{n_{1}}+ \sqrt{\frac{\log n}{D}} \Bigg).
        \end{align*}
	$\square$

	\end{document}